\renewcommand{\algorithmicrequire}{ \textbf{Input:}} %Use Input in the format of Algorithm
\renewcommand{\algorithmicensure}{ \textbf{Output:}} %UseOutput in the format of Algorithm
\providecommand{\algorithmname}{Algorithm}
\theoremstyle{plain}
\theoremstyle{definition}
\theoremstyle{plain}
\theoremstyle{plain}
\newtheorem{proposition}{Proposition}
\newcommand{\setParDef}{\setlength {\parskip} {0pt} }
\def\BibTeX{{\rm B\kern-.05em{\sc i\kern-.025em b}\kern-.08em
    T\kern-.1667em\lower.7ex\hbox{E}\kern-.125emX}}
\begin{document}

%
% paper title

\title{
A Reconfigurable Subarray Architecture and Hybrid Beamforming for 
Millimeter-Wave Dual-Function-Radar-Communication Systems
}

% author names and IEEE memberships
\author{
 Xin~Jin, Tiejun~Lv,~\IEEEmembership{Senior Member,~IEEE}, Wei Ni,~\IEEEmembership{Fellow,~IEEE}, Zhipeng~Lin,~\IEEEmembership{Member,~IEEE}, Qiuming~Zhu,~\IEEEmembership{Senior Member,~IEEE}, Ekram~Hossain,~\IEEEmembership{Fellow,~IEEE},~and~H.~Vincent~Poor,~\IEEEmembership{Life~Fellow,~IEEE}

\thanks{Manuscript received March 6, 2024; accepted April 22, 2024. This paper was supported in part by the National Natural Science Foundation of China under No. 62271068, the Beijing Natural Science Foundation under Grant No. L222046, the U.S National Science Foundation under Grants CNS-2128448 and ECCS-2335876. (\emph{corresponding author: Tiejun Lv}.)}

\thanks{X. Jin and T. Lv are with the School of Communication and Information Engineering, Beijing University of Posts and Telecommunications, Beijing 100876, China (e-mail: \{jxzoe, lvtiejun\}@bupt.edu.cn).

W. Ni is with the Data61, Commonwealth Scientific and Industrial Research Organization, Sydney, Marsfield, NSW 2122, Australia.
(e-mail: wei.ni@ieee.org).

Z. Lin and Q. Zhu are with the College of Electronic and Information Engineering, Nanjing University of Aeronautics and Astronautics, Nanjing 211106, China (e-mail: \{linlzp, zhuqiuming\}@nuaa.edu.cn).

E. Hossain is with the Department of Electrical and Computer Engineering, University of Manitoba, Canada (e-mail: ekram.hossain@umanitoba.ca).

H. V. Poor is with the Department of Electrical and Computer Engineering, Princeton University, Princeton, NJ 08544 USA (e-mail: poor@princeton.edu).   
}
}

% make the title area
\maketitle

\begin{abstract}
Dual-function-radar-communication (DFRC) is a promising candidate technology for next-generation networks. By integrating hybrid analog-digital (HAD) beamforming into a multi-user millimeter-wave (mmWave) DFRC system, we design a new reconfigurable subarray (RS) architecture and jointly optimize the HAD beamforming to maximize the communication sum-rate and ensure a prescribed signal-to-clutter-plus-noise ratio for radar sensing. Considering the non-convexity of this problem arising from multiplicative coupling of the analog and digital beamforming, we convert the sum-rate maximization into an equivalent weighted mean-square error minimization and apply penalty dual decomposition to decouple the analog and digital beamforming. Specifically, a second-order cone program is first constructed to optimize the fully digital counterpart of the HAD beamforming. Then, the sparsity of the RS architecture is exploited to obtain a low-complexity solution for the HAD beamforming. The convergence and complexity analyses of our algorithm are carried out under the RS architecture.  Simulations corroborate that, with the RS architecture, DFRC offers effective communication and sensing and improves energy efficiency by 83.4\% and 114.2\% with a moderate number of radio frequency chains and phase shifters, compared to the persistently- and fully-connected architectures, respectively.
\end{abstract}

\begin{IEEEkeywords}
	Reconfigurable subarray (RS) architecture, hybrid beamforming (HBF), dual-function-radar-communication (DFRC), penalty dual decomposition (PDD).  %integrated sensing and communication (ISAC).   
\end{IEEEkeywords}

\IEEEpeerreviewmaketitle

\section{Introduction}

\IEEEPARstart{T}{he} sixth generation (6G) of wireless systems is anticipated to go beyond traditional mobile networks in continuous and ubiquitous sensing capabilities~\cite{9737357}. A promising concept for realizing simultaneous sensing and communication capabilities is dual-functional-radar-communication (DFRC). By sharing hardware, software, and information, DFRC enables these capabilities while coping with resource constraints~\cite{10188491,9303435}. Benefiting from advances in millimeter-wave (mmWave) technologies, mmWave DFRC can support simultaneous high-speed communications and high-resolution sensing in environment-aware scenarios~\cite{9851407,9606831}, e.g., smart homes and Internet of Things (IoT). For instance, the authors of \cite{9851407} proposed a novel dual-function waveform design scheme that can estimate velocity and range at ultra-high resolutions in noisy environments.

To address severe path-loss of mmWave signals, massive multiple-input multiple-output (mMIMO) antenna arrays with significant beamforming gains have been studied for both communication and radar systems~\cite{8241348,8962251,zhang2023doppler}. 
The authors of \cite{zhang2023doppler} provided a parameter design approach for high mobility scenarios to enhance the resilience to Doppler shift.
Unfortunately, a fully digital (FD) mMIMO architecture is of limited practical value due to its prohibitive hardware and energy costs. To this end, 
hybrid analog-digital (HAD) architectures, offering flexible connectivity between radio frequency (RF) chains and antennas, become promising to balance hardware complexity and system performance \cite{8371237,7397861,7389996}. 

A similar concept in radar systems, i.e., a phased-MIMO radar, is used to join the high resolution of a MIMO radar and the coherent processing gain of a phased-array radar~\cite{5419124,9237135}. For instance, the authors of~\cite{5419124} developed a hybrid beamforming (HBF) phased-array architecture for active sensing applications, using a hybrid beamformer system for transmitting and receiving images. In~\cite{9237135}, an approach to synthesize probing beampatterns for mMIMO radars with fully-connected (FC) structures was presented based on learning-based RF chains and antenna selection.

As communication and sensing converge in mmWave mMIMO systems, it is imperative to design an effective transceiver architecture to avoid high hardware costs and enhance the integration gain of DFRC systems. To this end, we present a new DFRC system with a reconfigurable subarray (RS) architecture for improved spectral efficiency (SE), energy efficiency (EE), and sensing performance.

\subsection{Related Works}

Many works have considered HAD DFRC systems, including FC  \cite{9729809,10199478,10049999,9538857,10279443} and persistently-connected (PC) architectures \cite{8683591,9500661,9366836,9950549}. In \cite{9729809}, two-stage alternating minimization was proposed to optimize hybrid transmit beams and phase vectors for detecting several targets and communicating with several users, subject to the quality-of-service (QoS) requirements, e.g.,  signal-to-interference-plus-noise ratio (SINR), for communication. The authors of \cite{10049999} conducted weighted optimization of optimal FD transmission and radar beampattern for an Internet of Vehicles (IoV) scenario to improve SE of vehicular communication as well as sensing accuracy. However, these studies focused on radar similarity constraints or beampattern approximation, and overlooked the sensing performance of the radar receiver. Considering that the radar detection probability directly depends on the signal-to-clutter-plus-noise ratio (SCNR) of the radar receiver, the authors of \cite{9538857} optimized hybrid transmit beamforming to maximize the receive SCNR. Additionally, the authors of \cite{10279443} opted for a communication-centric transmitter design under the Cramer-Rao bound (CRB). These studies were generally based on an FC architecture, where every RF chain is hardwired to all antennas via a phase shift (PS) network, which could be prohibitive in mMIMO. 

Another hybrid architecture is the PC architecture, where RF chains connect individual antenna subarrays. In \cite{8683591}, an effective trade-off between sub-array MIMO radar and communication was achieved by minimizing a weighted sum of beamforming errors for both. The EE of the integrated system was maximized by activating a small number of RF chains \cite{9500661}. In \cite{9366836}, the communication rate was maximized under a sensing beampattern mismatch requirement in both single- and multi-user multi-carrier systems. More recently, the authors of \cite{9950549} considered a more realistic clutter interference environment and used a double-phase-shifter (DPS) structure, providing a well-tolerated compromise between the number of PSs and the performance gain of the DPS HBF. Although the PC architecture significantly reduces hardware costs compared to the FC architecture, these HBF designs offer no flexibility.

RS architecture-based HBF offers a promising solution to balance system performance and hardware costs in communication systems~\cite{9110865,7880698,8502711}. This is achieved by flexibly connecting an RF chain to a non-overlapping subarray through a switch network and PSs~\cite{9110865}. Hybrid subarrays were reconfigurable based on channel statistics in~\cite{7880698}, where a greedy algorithm was implemented to approach the SE achieved by exhaustive search. In multi-user scenarios, the authors of \cite{8502711} designed HBF based on sub-channel differences. However, these discussions were limited to communication scenarios. A summary of these existing studies is provided in Table I.

\begin{table}[t]
\scriptsize
\centering
%\captionsetup{labelfont={color=blue}}
\caption{Comparison between the proposed scheme and the existing HBF designs}
\label{tab:Comp}
\renewcommand\arraystretch{1.2}
    \resizebox{\linewidth}{!}{
\begin{tabular}{|c|c|c|c|c|c|}
\hline
\multirow{2}{*}{Ref.} & \multicolumn{3}{c|}{Architecture} & \multirow{2}{*}{Communication Metric} & \multirow{2}{*}{Sensing Metric} \\
\cline{2-4}
%\hline
~ & FC & PC & RS &  ~ & ~  \\
\hline
\cite{9729809} & \checkmark & ~ & ~ & SINR  & Desired
beampattern \\
\hline
\cite{10199478} & \checkmark & ~ & ~ & Sum-rate  & Beampattern matching error  \\
\hline
\cite{10049999} & \checkmark & ~ & ~ & \makecell[c]{FD matching error; \\ Multiuser interference (MUI)}  & \makecell[c]{Desired beampattern; \\Power distribution error}  \\
\hline
\cite{9538857} & \checkmark & ~ & ~ & SCNR & SINR  \\
\hline
\cite{10279443} & \checkmark & ~ & ~ & Weighted sum-rate & CRB   \\
\hline
\cite{8683591} & ~ & \checkmark & ~ & FD matching error & Desired
beampattern  \\
\hline
\cite{9366836} & ~ & \checkmark & ~ & Weighted sum-rate & Waveform
similarity \\
\hline
\cite{9950549} & ~ & \checkmark & ~ & Sum-rate & SCNR\\
\hline
\cite{9110865} & ~ & ~ & \checkmark & Sum-rate & Not applicable (n.a.)  \\
\hline
\cite{7880698} & ~ & ~ & \checkmark & SE & n.a.  \\
\hline
\cite{8502711} & ~ & ~ & \checkmark & Sum-rate & n.a. \\
\hline
Ours & ~ & ~ & \checkmark & Sum-rate & SCNR  \\
\hline
\end{tabular}}
\end{table}

\subsection{Challenges and Contributions}
  
The incorporation of an RS architecture and HBF in a mmWave DFRC system is not straightforward. On the one hand, the tight multiplicative coupling of the analog and digital beamforming makes their joint optimization challenging. 
The need to selectively connect the RF chains and antennas in the RS architecture could result in an NP-hard mixed integer programming problem~\cite{9110865}. Also, a direct use of techniques, such as alternating optimization and block decomposition, is likely to get trapped in inferior solutions or even fail to converge. On the other hand, many existing DFRC designs focus on the transmit HBF, which can penalize the sensing performance in cluttered environments. To cope with clutter interference, the authors of~\cite{9537599} considered the processing of radar receive filters and their optimization on a DFRC system, which effectively utilizes the available degrees of freedom (DoFs) in radar receive arrays. But only small-scale FD beamforming was considered in~\cite{9537599}.

This paper presents a new framework for HAD DFRC systems with hybrid transmit beamforming and radar receive beamforming, where the PSs and antennas are reconfigurably connected through adaptive switching networks. A dual-function base station (BS) delivers communication services to multiple users in the downlink, meanwhile detecting a potential target amidst signal-dependent interference sources,~i.e., clutter. Unlike conventional approaches focusing on radar beampattern matching errors, we adopt the radar receive SCNR to quantify and constrain the sensing performance. We propose an effective algorithm that optimizes the digital and analog beamformers to maximize the achievable sum-rate of the HAD DFRC system under a prescribed SCNR requirement of its radar sensing performance.

%\subsection{Contributions}
Following is the list of contributions of this paper:
\begin{itemize}
	\item We propose a new RS architecture for mmWave HAD DFRC systems in which every RF chain can be flexibly linked to a non-overlapping subset of antennas. This architecture can significantly reduce the hardware costs while maintaining the DFRC performance. 

    \item We present a new problem formulation to maximize the non-convex sum-rate under the SCNR constraints of radar sensing in an HAD DFRC system with an RS architecture. 
 
	\item Using penalty dual decomposition (PDD) and weighted minimum mean square error (WMMSE) as an interpretation of the non-convex sum-rate, we put forth a new WMMSE-PDD (WPDD) algorithm to solve the new problem through block coordination descent (BCD). As per iteration, a second-order cone program (SOCP) optimizes the FD transmit counterpart of the HBF. The sparsity of the RS architecture is exploited to obtain a low-complexity solution for the HBF.

    \item The proposed WPDD algorithm is extended to DFRC systems with PC architectures. Its convergence and complexity are analyzed, indicating the benefit of HBF in DFRC systems in terms of EE. 
\end{itemize}

Extensive simulations are carried out to verify our analysis and assess the effectiveness of a DFRC system with an RS architecture between performance and cost by comparing it with the FC and PC architectures. RS-based HBF leads to a 83.4\% improvement in EE compared to the PC-based HBF, and surpasses the EE achieved under the FC architectures by more than twofold.\par

%The remainder of this paper is organized as follows. 
The subsequent sections of this paper are arranged as follows. %Section~II surveys the relevant studies. 
The studied problem is cast in Section II. The new WPDD algorithm is presented in Section III, followed by its convergence, complexity, and EE analyses in Section IV. A summary of simulation results is given in Section~V. Lastly, the conclusions are presented in Section~VI. 

%\noindent
\textit{Notation}: 
% scalars are denoted by lower-case letters, 
Bold-face upper- and lower-cases indicate matrices and vectors, respectively. $(\cdot)^T$, %$(\cdot)^*$,
$(\cdot)^H$, $(\cdot)^\dag$, and $\text{tr}( \cdot )$ denote transpose, %conjugate, 
conjugate transpose, pseudo-inverse, and trace, respectively, ${\mathbb{C}^{M \times N}}$ denotes all $M \times N$ complex matrices,
% $\bf{I}$ denotes an identity matrix. 
${\cal R}{ \{  \cdot \}} $ takes the real part of a complex value, $\| \cdot\|$ provides Euclidean or Frobenius norm for a vector or matrix, $\| \cdot\|_0$ is the 0-norm of a vector, $\mathbb{E(\cdot)} $ takes the statistical expectation, ${\cal {CN}} (\cdot,\cdot )$ represents the complex normal distribution, blkdiag($\cdot$) denotes block diagonal matrix, $\left\lceil \cdot \right\rceil$ stands for ceiling. The notation used is collated in Table~\ref{tab:symbol}.  \par

%table:important symbols
\begin{table}[t] %\tiny
    \centering
    \caption{Notation and definitions}  \label{tab:symbol}
    \renewcommand\arraystretch{1.2}
    \resizebox{\linewidth}{!}{
    \begin{tabular}{c|l}
        \hline
        \textbf{Notation} & \textbf{Definition}\\   
        \hline
         ${M_{\mathrm{T}}}/{M_{\mathrm{R}}}$ & Number of antennas at the DFRC transmitter/receiver; \\
         $N_{\mathrm{RF}}^{\mathrm{t}}/N_{\mathrm{RF}}^{\mathrm{r}}$ & 
         Number of RF chains at the DFRC transmitter/receiver;  \\
         $K$ & Number of users;  \\
         $d_s$ & Number of data streams per user; \\
         ${M_{\mathrm{U}}}$ & Number of receive antennas at the $k$-th user;  \\
         ${P_{\mathrm{T}}}$ & Total transmit power of the DFRC BS;  \\
         $\alpha_{k,l}$ & The complex amplitude of the $l$-th path for the $k$-th user;\\
         ${\beta _0}/{\beta _j}$ & The complex reflection coefficients of the target/clutter; \\
         ${{\bf{T}}_{\mathrm{D}}}/{{\bf{W}}_{\mathrm{D}}}$ & Digital transmit/receive beamformer at the DFRC BS; \\
         % \hline
         ${{\bf{T}}_{\mathrm{RF}}}/{{\bf{W}}_{\mathrm{RF}}}$ & Analog transmit/receive beamformer at the DFRC BS;  \\
         % \hline
         ${\bf{U}}_k$  & 
         Receive beamformer at the $k$-th user;  \\   
         ${{\bf{H}}_k}$ & 
         CSI from the BS to the $k$-th user; \\
         % \hline
         ${{\bf{a}}_{\mathrm{t}}}\left( {{\theta}} \right) / {{\bf{a}}_{\mathrm{r}}} \left( {\theta} \right)$ & Transmit/receive steering vector at the angle $\theta$; \\
         % \hline
         $R$ & The achievable sum-rate of $K$ users; \\
         % \hline
         $\varGamma$ & The receive SCNR at the DFRC receiver; \\
         ${{\bf{T}}_k}/{\bf{W}}$ & Auxiliary variables introduced for the WPDD algorithm; \\
         $\rho$ & The penalty parameter of the WPDD algorithm; \\
         ${\bf{D}}_k/\tilde{\bf{D}}$ & The dual variables of the WPDD algorithm. \\
         \hline
    \end{tabular}   }
\end{table}

\section{System Architecture And Problem Statement}

%figure 1
\begin{figure}[t]
	\centering
	\includegraphics[scale=0.5]{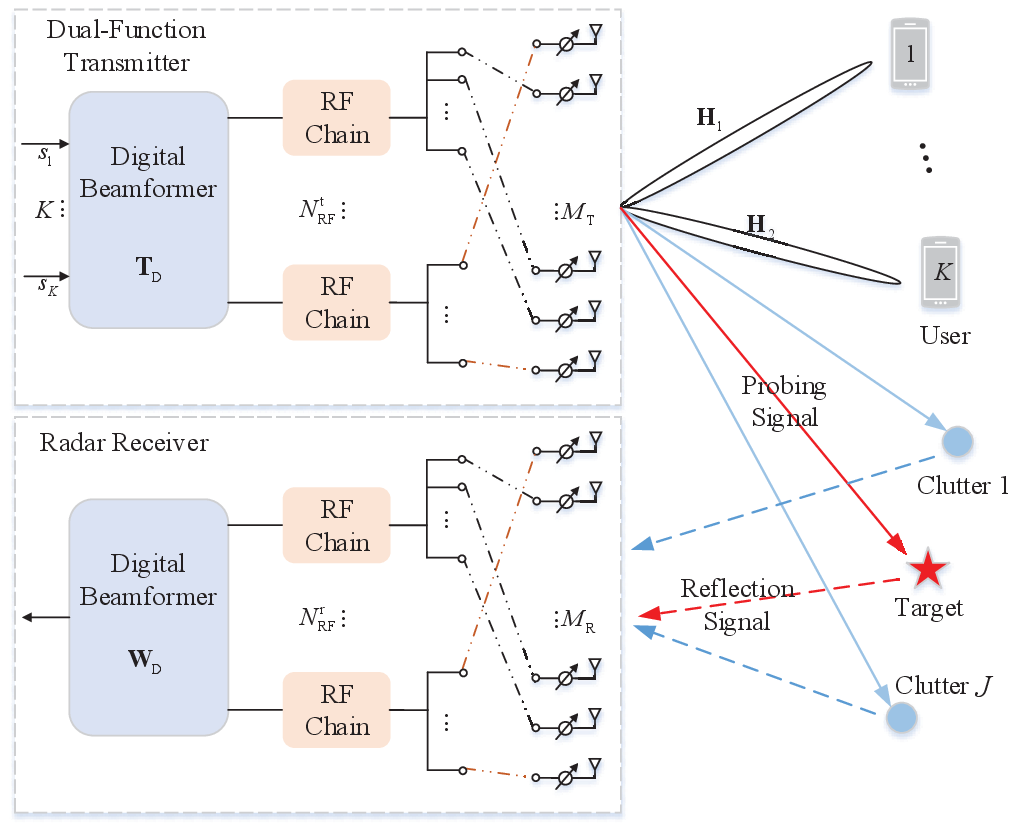}    \captionsetup{justification=raggedright,font={small}}
	\caption{Illustration of the considered DFRC system with an RS architecture.} 
	\label{system1_fig}
\end{figure}

As depicted in Fig. \ref{system1_fig}, an mmWave DFRC BS and users both have uniform linear arrays (ULAs) installed.
The DFRC BS has $M_{\mathrm{T}}$ transmit antennas and $N_{\mathrm{RF}}^{\mathrm{t}}$ RF chains to serve $K$ users in the downlink. It also has $M_{\mathrm{R}}$ receive antennas and $N_{\mathrm{RF}}^{\mathrm{r}}$ RF chains to detect a target of interest in cluttered environments. Each user has $M_{\mathrm{U}}$ receive antennas. 
The angle-of-departure (AoD) of the radar target is the same as the angle-of-arrival (AoA), as considered in the literature~\cite{6649991}. An RS-based HAD architecture is considered at both the receive and transmit sides of the DFRC BS, where every RF chain can be linked to a non-overlapping subset of transmit antennas via an adaptive switching network. 

Let ${\bf{s}}( t ) = {\left[ {{\bf{s}}_1^T( t ),\cdots,{\bf{s}}_K^T( t )} \right]^T} \in {\mathbb{C}^{N_s}}$ ($N_s \le N_{\mathrm{RF}}^{\mathrm{t}} \ll {M_{\mathrm{T}}}$) denote the symbol sequence sent in the $t$-th slot for $K$ users and the radar target. $d_s$ is the number of data streams for user $k$ ($\mathbb{E} \{ {{\bf{s}}_k( t ){\bf{s}}_k{{( t )}^H}} \} = {{\bf{I}}_{d_s}}$). The symbols are independent among the users. The transmit signal is obtained as
\begin{equation} \label{xsignal}
{\bf{x}} (t)  = {{\bf{T}}_{\mathrm{RF}}}{{\bf{T}}_{\mathrm{D}}}{\bf{s}} ( t ) = \sum\limits_{k = 1}^K {{{\bf{T}}_{\mathrm{RF}}}{{\bf{T}}_{{\mathrm{D}},k}}{{\bf{s}}_k}( t )},
\end{equation}
%}
where ${\bf{T}}_{{\mathrm{D}},k} \in {\mathbb{C}^{{N_{\mathrm{RF}}^{\mathrm{t}}} \times d_s}}$ stands for the digital beamfomer for the $k$-th user, ${{\bf{T}}_{\mathrm{D}}} = \left[ {{{\bf{T}}_{{\mathrm{D}},1}},\cdots,{{\bf{T}}_{{\mathrm{D}},K}}} \right]$, and ${\mathbf{T}}_{\mathrm{RF}} \in {\mathbb{C}^ {M_{\mathrm{T}} \times N_{\mathrm{RF}}^{\mathrm{t}}}}$ stands for the analog beamformer.

\subsection{Communication Model}
Following the extended Saleh-Valenzuela model~\cite{8683591,6834753}, mmWave channel between the DFRC BS and user $k$ is modeled as  
\begin{equation} \label{hk}
{{\bf{H}}_k} = \sqrt {\frac{{{M_{\mathrm{T}}}{M_{\mathrm{U}}}}}{{{L_k}}}} \sum\limits_{l = 1}^{{L_k}} {{\alpha _{k,l}} {\bf{a}}_{\mathrm{r}} ({\psi _{k,l}} ) {\bf{a}}_{\mathrm{t}}^H \left( {{\phi _{k,l}}} \right)},\forall k,
\end{equation}
where $L_k$ specifies the number of propagation paths to the user $k$, $\alpha_{k,l} \sim {\cal {CN}} ( {0,10^{-0.1PL(d_k)}} ) $ is the complex gain of the $l$-th path, $l= 1,\cdots,L_k$. $PL(d_k)$ is the path loss over the distance, $d_k$, between the BS and user $k$. ${\bf{a}}_{\mathrm{t}} ( {{\phi _{k,l}}} )$ and ${\bf{a}}_{\mathrm{r}} ( {{\psi _{k,l}}} )$ give the transmit and receive array response vectors, respectively:
\begin{equation}     
 {\bf{a}}_{\mathrm{t}} (\! {{\phi _{k,l}}}  ) \!\! =  \! \!\frac{1}{{\sqrt { M_{\mathrm{T}} } }}{ {\! \left[ \!{1, {e^{ - \!j\frac{{2\pi }}{\lambda } \! d\sin\!  {{\phi _{k,l}}}  }} \!,\!  \cdots \!,\! {e^{ - \!j\frac{{2\pi }}{\lambda } \! d\left( \! {M_{\mathrm{T}} \! - \! 1} \!\right) \sin\! {{\phi _{k,l}}}  }}} \! \right] \!} ^T}, 
\end{equation}
\begin{equation} 
    { \bf{a}}_{\mathrm{r}} (\! {{\psi _{k,l}}}  ) \!\! =  \!\! \frac{1}{{\sqrt { M_{\mathrm{U}} } }}{ {\! \left[ \!{1, {e^{ - \!j\frac{{2\pi }}{\lambda } \! d\sin\!  {{\psi _{k,l}}}  }} \!,\!  \cdots \!,\! {e^{ -\! j\frac{{2\pi }}{\lambda } \!
    d\left( \! {M_{\mathrm{U}} \! - \! 1} \!\right) \!\sin\!  {{\psi _{k,l}}}  }}} \! \right] \!} ^T}\!. 
\end{equation}
Here, $\phi_{k,l}$ and $\psi_{k,l}$ are the AoD/AoA of the $l$-th path to the $k$-th user, respectively; $\lambda$ indicates the signal wavelength; $d = \lambda/2$ specifies antenna spacing. Suppose ${\bf{H}}_k$, $\forall k$, is known at the DFRC BS. In practice, the channels can be estimated, e.g., using the algorithms developed in \cite{8462337,9207745,8286860}.
%For example, the BS can periodically transmit pilot signals. The users can estimate their channels based on the pilot signals, using the algorithms developed in \cite{8462337,9207745}, and then return the estimated channels to the BS. Alternatively, the BS can estimate the AoAs of the multipath components based on the incident signals from the users, using the techniques developed in~\cite{8286860}.

The received signal of user $k$ is
\begin{equation}  \label{ysignal}
    {{\bf{y}}_k} ( t ) = {\bf{H}}_k \sum\limits_{k = 1}^K {{{\bf{T}}_{\mathrm{RF}}}{{\bf{T}}_{{\mathrm{D}},k}}{{\bf{s}}_k}( t )}  + {{\bf{n}}_k}( t ),
\end{equation}
where ${{\bf{n}}_k}( t ) \sim {\cal CN}( {{\bf{0}},{\sigma ^2}{\bf{I}}_{M_{\mathrm{U}}} } )$ stands for the additive white Gaussian noise (AWGN) of the $k$-th user. ${{\bf{n}}_k}( t )$ and ${{\bf{s}}_k}( t )$ are independent of each other. For the brevity of notation, we suppress the time index $t$ in the remainder of this paper. 

After FD receive beamforming, ${\bf{U}}_k \in {\mathbb{C}^ {M_{\mathrm{U}} \times d_s}} $, the detected signal of user $k$ is 
\begin{align}   \label{ysignal2}
    {\hat{\bf{s}}_k} 
    & \!=\! {\bf{U}}_k^H {{\bf{y}}_k} \nonumber \\
    & \!=\! \underbrace{ {\bf{U}}_k^H {\bf{H}}_k{{\bf{T}}_{\mathrm{RF}}}{{\bf{T}}_{{\mathrm{D}},k}}{{\bf{s}}_k} }_{\text{desired signal}}  \!+ \! \underbrace{\sum\limits_{i \ne k}^K \!{ {\bf{U}}_k^H {\bf{H}}_k{{\bf{T}}_{\mathrm{RF}}}{{\bf{T}}_{{\mathrm{D}},i}}{{\bf{s}}_i}} }_{\text{multi-user interference}}\!+\! \underbrace{{\bf{U}}_k^H {{\bf{n}}_k} }_{\text{noise}}.
\end{align}
%where the first term on the right-hand side (RHS) of \eqref{ysignal2} is the signal destined for user $k$; the second and third terms account for the interference and noise received by user $k$, respectively. 
Consequently, the overall achievable rate of the $K$ users is
\begin{align}
R (  {{{\bf{T}}_{\mathrm{RF}}} , \!{{\bf{T}}_{{\mathrm{D}},k}}}, {\bf{U}}_k )  \! = \!
&\sum\limits_{k = 1}^K {\log \det} ( {\bf{I}}_{d_s} + {\bf{U}}_k^H {\bf{H}}_k {{\bf{T}}_{\mathrm{RF}}}{{\bf{T}}_{{\mathrm{D}},k}} \nonumber \\
&\times {{\bf{T}}_{{\mathrm{D}},k}^H} {{\bf{T}}_{\mathrm{RF}}^H} { {\bf{H}}_k^H } {{\bf{U}}_k} {\bf{R}}_k^{-1}),
\end{align}
where ${\bf{R}}_k \!=\! {\bf{U}}_k^H ( {\sigma ^2}{\bf{I}}_{M_{\mathrm{U}}} \!+\! \sum\limits_{i \ne k}^K \!{ {\bf{H}}_k{{\bf{T}}_{\mathrm{RF}}}{{\bf{T}}_{{\mathrm{D}},i}} {{\bf{T}}_{{\mathrm{D}},i}^H} {{\bf{T}}_{\mathrm{RF}}^H} { {\bf{H}}_k^H }  }  ) {{\bf{U}}_k} $.

\subsection{Radar Model}
Apart from sending communication symbols to the $K$ users, the transmitted waveform is also exploited to accomplish a radar detection task. Considering an obstacle detection scenario, a target (e.g., a slowly moving pedestrian) is located at angle $\theta _0$. There are also $J$ stationary clutterers (e.g., trees and buildings) at angles $\theta _j, j = 1,\cdots,J$. In this case, the time delay and Doppler shift are relatively negligible, as discussed in \cite{9724259}. The prior knowledge of $\theta _0$ and $\theta _j$ is obtainable from an environmental dynamic database using a cognitive paradigm~\cite{6404093,9724259}. 

Suppose that the potential self-interference from the transmit array to the receive array of the MIMO radar is adequately addressed by implementing appropriate techniques, for instance, the one developed in \cite{6832464}. According to \eqref{xsignal}, the echo signal received by the BS can be written as

\begin{equation}  \label{rsignal1}
%\resizebox{.9\hsize}{!}{
    {{\bf{y}}_{\mathrm{R}}} = {\beta _0}{{\bf{a}}_{\mathrm{r}}}( {{\theta _0}} ){\bf{a}}_{\mathrm{t}}^T( {{\theta _0}} ){\bf{x}}+ \sum\limits_{j = 1}^J {{\beta _j}{{\bf{a}}_{\mathrm{r}}}( {{\theta _j}} ){\bf{a}}_{\mathrm{t}}^T( {{\theta _j}} ){\bf{x}}} + {\bf{n}}_{\mathrm{R}},
%    $}
\end{equation}
%}
where $\beta _0$ and $\beta _j$ denote the round-trip attenuation and complex reflection coefficients concerning the target and the $j$-th clutter, respectively; %the attenuation factor, including round-trip path-loss and radar cross-section (RCS) σr
$\mathbb{E}( {{\left| {\beta _0} \right|}^2} ) = \sigma _0^2$; $\mathbb{E}( {{\left| {{\beta _j}} \right|}^2} ) = \sigma _{\mathrm{C}}^2$; ${\bf{a}}_{\mathrm{r}} ( {\theta } )$ gives the $M_{\mathrm{R}} \times 1$ receive steering vector; ${\bf{a}}_{\mathrm{t}} ( {\theta} )$ specifies the $M_{\mathrm{T}} \times 1$ transmit steering vector; ${\bf{a}}_{\mathrm{r}} ( \theta ) $ and ${\bf{a}}_{\mathrm{t}}(\theta )$ are obtained in the same way as (3); ${\bf{n}}_{\mathrm{R}} \sim {\cal CN}( {{\bf{0}},\sigma _{\mathrm{R}}^2{{\bf{I}}_{M_{\mathrm{R}}}}} )$ is the AWGN. 

An RS-based HAD architecture is also considered at the DFRC BS receiver. The received echo signal after analog receive beamforming ${\mathbf{W}}_{\mathrm{RF}} \in {\mathbb{C}^ {M_{\mathrm{R}} \times N_{\mathrm{RF}}^{\mathrm{r}}}}$ and digital receive beamforming ${\bf{W}}_{{\mathrm{D}}} \in {\mathbb{C}^{N_{\mathrm{RF}}^{\mathrm{r}} \times N_{\mathrm{s}} }}$, is given by
\begin{align}  \label{rsignal2}
    {\bf{y}}'_{\mathrm{R}} 
    &= {{\bf{W}}_{{\mathrm{D}}}^H}{{\mathbf{W}}_{\mathrm{RF}}^H}{{\bf{y}}_{\mathrm{R}}} \nonumber \\
    &= \underbrace{{\beta _0}{{\bf{W}}_{{\mathrm{D}}}^H}{{\mathbf{W}}_{\mathrm{RF}}^H}{\bf{A}}( {{\theta _0}} ){\bf{x}} }_{\text{target}} + \underbrace{ {{\bf{W}}_{{\mathrm{D}}}^H}{{\mathbf{W}}_{\mathrm{RF}}^H} \sum\limits_{j = 1}^J \! {{\beta _j}{\bf{A}}( {{\theta _j}} ){\bf{x}}} }_{\text{clutters}} \nonumber \\
    &+ \underbrace{{{\bf{W}}_{{\mathrm{D}}}^H}{{\mathbf{W}}_{\mathrm{RF}}^H}{\bf{n}}_{\mathrm{R}} }_{\text{noise}},
\end{align}
where ${\bf{A}}(  \theta ) \! = \! {{\bf{a}}_{\mathrm{r}}}( \theta  ){\bf{a}}_{\mathrm{t}}^T( \theta )$.

%The receive SCNR is used to measure the target detection and localization performance of the radar, as given by
We adopt the SCNR as the metric to measure the radar's target detection and localization capabilities, as given by
%\begin{align}   \label{Rsnr}
%   & \varGamma ( {{{\bf{T}}_{\mathrm{RF}}},{{\bf{T}}_{\mathrm{D}}},{\bf{W}}_{{\mathrm{RF}}},{\bf{W}}_{{\mathrm{D}}} } ) \nonumber \\
%   & \!=\! \frac{ \mathbb{E} {( {\left| {{\beta _0}{{\bf{w}}^H}{\bf{A}}( {{\theta _0}} ){\bf{t}}} \right|}^2 ) } } { \mathbb{E} \left( {{{{\left| {{{\bf{w}}^H} \! \sum\limits_{j = 1}^J \!{{\beta _j}{\bf{A}}\left( {{\theta _j}} \right){\bf{t}}} } \right|}^2}}} \! + \! {\left| {{\bf{w}}^H} \tilde{\bf{n}}_{\mathrm{R}}\right|}^2 \right)}  \nonumber \\
%   &  \!=\! \frac{ {\sigma _0^2} {\left| {\bf{w}}^H {\bf{A}}\left( {{\theta _0}} \right) {\bf{t}} \right|^2}}{ {{{\bf{w}}^H}{{\bf{\Sigma}}_{\mathrm{cn}}} {{\bf{w}} }}  }, 
%\end{align}
\begin{align}   \label{Rsnr}
   & \varGamma ( {{{\bf{T}}_{\mathrm{RF}}},{{\bf{T}}_{\mathrm{D}}},{\bf{W}}_{{\mathrm{RF}}},{\bf{W}}_{{\mathrm{D}}} } ) \nonumber \\
   & \!=\! \frac{ \mathbb{E} {\left( {\left\| {{\beta _0}{{\bf{W}}_{{\mathrm{D}}}^H}{{\mathbf{W}}_{\mathrm{RF}}^H}{\bf{A}}( {{\theta _0}} ){\bf{x}}} \right\|}^2 \right) } } { \mathbb{E} \left( {{{{\left\| {{{\bf{W}}_{{\mathrm{D}}}^H}{{\mathbf{W}}_{\mathrm{RF}}^H} \! \sum\limits_{j = 1}^J \!{{\beta _j}{\bf{A}}\left( {{\theta _j}} \right){\bf{x}}} } \right\|}^2}}} \! \right)+ \! \mathbb{E} \left( {\left\| {{\bf{W}}_{{\mathrm{D}}}^H}{{\mathbf{W}}_{\mathrm{RF}}^H}{\bf{n}}_{\mathrm{R}}\right\|}^2 \right)}  \nonumber \\ 
   %& \!=\! \frac{ {\sigma _0^2{ {{\bf{W}}^H} {\bf{A}}\!( {{\theta _0}} ) { {{\bf{T}}{\bf{T}}^H} } {{\bf{A}}^H}\!( {{\theta _0}}) {\bf{W}} }} } { {{\bf{W}}^H} \! \left( \!{\sum\limits_{j = 1}^J {\sigma _{\mathrm{C}}^2{\bf{A}}( {{\theta _j}} ) \!{ {{\bf{T}}{\bf{T}}^H} }{{\bf{A}}^H}( {{\theta _j}} )} } \! \right) \! {\bf{W}} \!+\! \sigma _{\mathrm{Z}}^2{{\bf{W}}^H}{\bf{W}} } \\
   &  \!=\! \frac{ \mathrm{tr} ({{{\bf{W}}_{{\mathrm{D}}}^H}{{\mathbf{W}}_{\mathrm{RF}}^H}{{\bf{\Sigma}}_{\mathrm{t}}} {{\bf{W}}_{{\mathrm{RF}}}}{{\mathbf{W}}_{\mathrm{D}}} }) }{ \mathrm{tr}({{{\bf{W}}_{{\mathrm{D}}}^H}{{\mathbf{W}}_{\mathrm{RF}}^H}{{\bf{\Sigma}}_{\mathrm{cn}}} {{\bf{W}}_{{\mathrm{RF}}}}{{\mathbf{W}}_{\mathrm{D}}} } ) }, 
\end{align}
where ${\bf{\Sigma}}_{\mathrm{t}} \! =  \! {\sigma _0^2{ {\bf{A}}\!( {{\theta _0}} ) {{\bf{T}}_{\mathrm{RF}}} { \sum\limits_{k = 1}^K ({{{\bf{T}}_{{\mathrm{D}},k}} {{\bf{T}}_{{\mathrm{D}},k}^H} }) } {{\bf{T}}_{\mathrm{RF}}^H} {{\bf{A}}^H}\!( {{\theta _0}}) }} $, and ${\bf{\Sigma}}_{\mathrm{cn}} \!\! =  \!\!\sum\limits_{j = 1}^J \! {\sigma _{\mathrm{C}}^2{\bf{A}}\!( {{\theta _j}} ) {{\bf{T}}_{\mathrm{RF}}} { \sum\limits_{k = 1}^K ({{{\bf{T}}_{{\mathrm{D}},k}} {{\bf{T}}_{{\mathrm{D}},k}^H} }) } {{\bf{T}}_{\mathrm{RF}}^H}   {{\bf{A}}^H}\!( {{\theta _j}})}  +  \sigma _{\mathrm{R}}^2{\bf{I}}$.

We note that SCNR is a measurable sensing performance indicator at the receiver, and can be reasonably adapted to different signal characteristics and application scenarios. Moreover, the radar detection probability is known to be a non-decreasing function of the SCNR\footnote{ The target detection probability (PD) increases monotonically with the received SCNR, i.e., $P_{\mathrm{D}}(\gamma) = Q (\sqrt{2{\gamma}},\sqrt{-2 \ln P_{\mathrm{FA}}})$ \cite{9537599,10007634}, where $Q (\cdot,\cdot)$ is the Marcum Q function and $P_{\mathrm{FA}}$ gives the false-alarm probability.}, indicating the relevance of the SCNR in DFRC systems. The measurability and relevance of SCNR offer excellent practicality, leading it to be broadly considered in radar and DFRC systems \cite{10007634}.

\setlength{\parskip}{0.2cm plus4mm minus3mm}
\subsection{Problem Statement}

We holistically design the hybrid transmit beamformers ${\bf{T}}_{{\mathrm{D}},k}$ and ${\bf{T}}_{\mathrm{RF}}$, and the hybrid receive beamformers ${\bf{W}}_{\mathrm{D}}$ and ${\bf{W}}_{\mathrm{RF}}$ under the RS-based HAD architecture, to elevate the downlink sum-rate while ensuring the radar detection performance. This problem is cast as follows:
\begin{subequations}
\begin{align}
    %\mathcal P_1:
    \mathrm{P}_1: \mathop {\max }\limits_{ \mathcal{X} } \
    %&\sum\limits_{k = 1}^K \!{{{\log }_2}} \!\!\left(\!\! {1 \! + \! \frac{{{{\left| {{\bf{h}}_k^H{{\bf{T}}_{\mathrm{RF}}}{{\bf{T}}_{{\mathrm{D}},k}}} \right|}^2}}}{{\sigma _k^2{ \! + \! }\sum\limits_{ {i=1,i \ne k} }^K {{{\left|{{\bf{h}}_k^H{{\bf{T}}_{\mathrm{RF}}}{{\bf{T}}_{{\mathrm{D}},i}}} \right|}^2}} }}} \!\!\right)   %\nonumber
    & R (  {{{\bf{T}}_{\mathrm{RF}}} , \!{{\bf{T}}_{{\mathrm{D}},k}}}, {\bf{U}}_k )  
    \label{eq:problem1A} \\
    \text{s.t.} \ &\varGamma ( {{{\bf{T}}_{\mathrm{RF}}},{{\bf{T}}_{\mathrm{D}}}, {{\bf{W}}_{\mathrm{RF}}},{{\bf{W}}_{\mathrm{D}}} } ) \ge \gamma, \label{eq:problem1B} \\ 
    &\left\| {{{\bf{T}}_{\mathrm{RF}}}{{\bf{T}}_{\mathrm{D}}}} \right\|_F^2 \le P_{\mathrm{T}}, \label{eq:problem1C} \\
    &{{\bf{T}}_{\mathrm{RF}}}( {m,n} )\! \in \! \left\{ {0,{\cal F}} \right\}, \forall m,n, \label{eq:problem1D}  \\
    &{\left\| {{{\bf{T}}_{\mathrm{RF}}}( {m,:} )} \right\|_0} = 1, \forall m, \label{eq:problem1E}  \\
    &{{\bf{W}}_{\mathrm{RF}}}( {\tilde{m},\tilde{n}} )\! \in \! \left\{ {0, {\cal F}} \right\}, \forall \tilde{m},\tilde{n}, \label{eq:problem1F}  \\
    &{\left\| {{{\bf{W}}_{\mathrm{RF}}}( {\tilde{m},:} )} \right\|_0} = 1, \forall \tilde{m}, \label{eq:problem1G} 
\end{align}
\end{subequations}
where $\mathcal{X}=\{ { {{\bf{T}}_{\mathrm{RF}}}, 
{{\bf{T}}_{\mathrm{D},k}},  {\bf{U}}_k ,{{\bf{W}}_{\mathrm{RF}}},{{\bf{W}}_{\mathrm{D}}} } \}$ collects all of the optimization variables. Constraint \eqref{eq:problem1B} ensures the minimum SCNR $\gamma$. Constraint \eqref{eq:problem1C} specifies the total transmit power budget. 
Constraints \eqref{eq:problem1D}-\eqref{eq:problem1G} specify the HBF mapping criteria. More specifically, if the $n$-th RF chain is associated with the $m$-th antenna via a PS, ${{\bf{T}}_{\mathrm{RF}}} \left( {m,n} \right) \in {\cal F}$, $ m = 1,\cdots,{M_\mathrm{T}},n = 1,\cdots,{N_\mathrm{RF}^{\mathrm{t}}} $, contains a nonzero phase, where ${\cal F}$ collects all possible values of a PS; otherwise, ${{\bf{T}}_{\mathrm{RF}}} \left( {m,n} \right) = 0$. Similarly, if the $\tilde n$-th RF chain is connected to the $\tilde m$-th antenna via a PS, ${{\bf{W}}_{\mathrm{RF}}} \left( {\tilde m,\tilde n} \right) \in {\cal F}$, $ \tilde m = 1,\cdots,{M_\mathrm{R}},\tilde n = 1,\cdots,{N_\mathrm{RF}^{\mathrm{r}}} $; otherwise, ${{\bf{W}}_{\mathrm{RF}}} \left( {\tilde m,\tilde n} \right) = 0$. Since no subarrays overlap, the analog beamformer has only one nonzero element per row, i.e., ${\left\| {{{\bf{T}}_{\mathrm{RF}}}\left( {m,:} \right)} \right\|_0} = 1$ for $ m = 1,\cdots,{M_\mathrm{T}} $; ${\left\| {{{\bf{W}}_{\mathrm{RF}}}\left( {\tilde{m},:} \right)} \right\|_0} = 1$ for $ \tilde{m} = 1,\cdots,{M_\mathrm{R}} $.

Problem $\mathrm{P}_1$ poses significant challenges ascribed to non-convex objectives and constraints, and the strong coupling of digital and analog beamformers. The non-convexity makes it difficult to attain a globally optimal solution.
A direct attempt to tackle the coupling effect with block decomposition methods, e.g., block successive upper-bound minimization and inexact flexible parallel algorithm, can also lead to difficulty in ensuring convergence~\cite{9120361}. In what follows, we develop a tractable solution to this problem.

\section{WPDD Algorithm for DFRC with RS Architecture}
\subsection{Reformulation of Problem $\mathrm{P}_1$}

To overcome the difficulty in solving the nonconvex sum-rate problem $\mathrm{P}_1$ with coupled variables, we develop a tractable objective according to the WMMSE method to decrease the complexity of solving problem $\mathrm{P}_1$. As revealed in \textbf{Proposition~\ref{proposition1}}, problem $\mathrm{P}_1$ can be rewritten equivalently as a WMMSE minimization problem \cite{5756489}.

\begin{proposition}   \label{proposition1}
Problem $\mathrm{P}_1$ and the following problem of minimizing weighted sum mean square error (MSE) have the same global optimum:
%{ {{\bf{T}}_{RF}},\left\{ {{{\bf{T}}_{D,k}}} \right\},\left\{ {{\omega _k}} \right\},\left\{ {{u_k}} \right\} }
\begin{subequations} 
\begin{align}
    \mathrm{P}_2: \mathop {\max }\limits_{ \mathcal{X}, \{ {\bf{G}}_k \},\{ {\bf{E}}_k \} }& \sum\limits_{k = 1}^K {( \log \det ({\bf{G}}_k) - \mathrm{tr} ({\bf{G}}_k {\bf{E}}_k )  ) }  \label{eq:problem2A} \\
    \mathrm{s.t.} \  
    %&\varGamma ( {{{\bf{T}}_{\mathrm{RF}}},{{\bf{T}}_{\mathrm{D}}} } ) \ge \gamma, \label{eq:problem2B}  \\
    &(\mathrm{\ref{eq:problem1B}})-(\mathrm{\ref{eq:problem1G}}),  \nonumber
\end{align}
\end{subequations}
where ${\bf{G}}_k$ denotes the weighting factor for user $k$, ${\bf{U}}_k$ is the receive beamforming for user $k$, and ${\bf{E}}_k$ is the MSE of user $k$, as given by
\begin{align} \label{eq:ek}
    {{\bf{E}}_k} &= \mathbb{E}{\{({\hat{\bf{s}}_k} - {\bf{s}}_k ) ({\hat{\bf{s}}_k} - {\bf{s}}_k )^H \}} \nonumber \\ 
    &= {\bf{I}}_{d_s} -2{\cal R}({ {\bf{U}}_k^H {\bf{H}}_k {{\bf{T}}_{\mathrm{RF}}}{{\bf{T}}_{{\mathrm{D}},k}} }) + {\sigma ^2} {{\bf{U}}_k^H}{{\bf{U}}_k}  \nonumber \\ 
    & \ \ + \sum\limits_{i =1 }^K \!{\bf{U}}_k^H {{\bf{H}}_k{{\bf{T}}_{\mathrm{RF}}}{{\bf{T}}_{{\mathrm{D}},i}} {{\bf{T}}_{{\mathrm{D}},i}^H} {{\bf{T}}_{\mathrm{RF}}^H} { {\bf{H}}_k^H } } {{\bf{U}}_k}.  %& +{\bf{U}}_k^H {\bf{H}}_k {{\bf{T}}_{\mathrm{RF}}}{{\bf{T}}_{{\mathrm{D}},k}} {{\bf{T}}_{{\mathrm{D}},k}^H}{{\bf{T}}_{\mathrm{RF}}^H} {{\bf{H}}_k^H} {{\bf{U}}_k} \nonumber \\ 
\end{align} 

\end{proposition}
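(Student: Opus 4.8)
The plan is to establish the equivalence via the standard weighted-MMSE argument, treating $\{\mathbf{G}_k\}$ and $\{\mathbf{U}_k\}$ as auxiliary variables that enter only the objective of $\mathrm{P}_2$. First I would observe that neither the weight $\mathbf{G}_k$ nor the MSE matrix $\mathbf{E}_k$ in \eqref{eq:ek} appears in any of the constraints (\ref{eq:problem1B})--(\ref{eq:problem1G}); hence the feasible set restricted to the shared beamformers $(\mathbf{T}_{\mathrm{RF}}, \mathbf{T}_{\mathrm{D},k}, \mathbf{W}_{\mathrm{RF}}, \mathbf{W}_{\mathrm{D}})$ is identical for $\mathrm{P}_1$ and $\mathrm{P}_2$. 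It therefore suffices to show that, for every fixed feasible $\mathcal{X}$, maximizing the objective of $\mathrm{P}_2$ over $\{\mathbf{G}_k\}$ and $\{\mathbf{U}_k\}$ reproduces the sum-rate $R$ up to an additive constant.

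Second, for fixed beamformers I would optimize the inner objective over $\mathbf{G}_k$. Since $f(\mathbf{G}_k)=\log\det\mathbf{G}_k-\mathrm{tr}(\mathbf{G}_k\mathbf{E}_k)$ is strictly concave on the cone $\mathbf{G}_k\succ\mathbf{0}$, the first-order condition $\nabla_{\mathbf{G}_k}f=\mathbf{G}_k^{-1}-\mathbf{E}_k=\mathbf{0}$ yields the unique maximizer $\mathbf{G}_k^{\star}=\mathbf{E}_k^{-1}$. Back-substitution gives $\log\det\mathbf{E}_k^{-1}-\mathrm{tr}(\mathbf{I}_{d_s})=-\log\det\mathbf{E}_k-d_s$, so the $k$-th term reduces to $-\log\det\mathbf{E}_k$ plus a constant.

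Third, I would minimize $\log\det\mathbf{E}_k$ over the receive filter $\mathbf{U}_k$. Completing the square in \eqref{eq:ek} shows that $\mathbf{E}_k$ is minimized, in the Loewner order, by the MMSE receiver $\mathbf{U}_k^{\star}=\mathbf{J}_k^{-1}\mathbf{H}_k\mathbf{T}_{\mathrm{RF}}\mathbf{T}_{\mathrm{D},k}$, where $\mathbf{J}_k=\sigma^2\mathbf{I}_{M_{\mathrm{U}}}+\sum_{i=1}^K\mathbf{H}_k\mathbf{T}_{\mathrm{RF}}\mathbf{T}_{\mathrm{D},i}\mathbf{T}_{\mathrm{D},i}^H\mathbf{T}_{\mathrm{RF}}^H\mathbf{H}_k^H$ is the received signal-plus-noise covariance. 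Invoking the matrix inversion (Woodbury) identity then gives the key relation $\mathbf{E}_k^{\star}=(\mathbf{I}_{d_s}+\mathbf{T}_{\mathrm{D},k}^H\mathbf{T}_{\mathrm{RF}}^H\mathbf{H}_k^H(\mathbf{R}_k')^{-1}\mathbf{H}_k\mathbf{T}_{\mathrm{RF}}\mathbf{T}_{\mathrm{D},k})^{-1}$, with $\mathbf{R}_k'$ the interference-plus-noise covariance, so that $-\log\det\mathbf{E}_k^{\star}$ coincides exactly with the $k$-th rate term of $R$. Summing over $k$ shows the maximized $\mathrm{P}_2$ objective equals $R-Kd_s$. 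Since the same $\mathbf{U}_k^{\star}$ is also the maximizer of $R$ over $\mathbf{U}_k$, and the additive constant $Kd_s$ does not affect the argmax, $\mathrm{P}_1$ and $\mathrm{P}_2$ share the same global optimizer.

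I expect the main obstacle to be this third step: verifying the MMSE-rate identity through the Woodbury identity and confirming that the filter minimizing $\log\det\mathbf{E}_k$ is simultaneously the one maximizing the $k$-th term of $R$, so that the two optimizations over $\mathbf{U}_k$ are mutually consistent. The remaining steps are routine concavity and feasibility bookkeeping.
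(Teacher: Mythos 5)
Your proposal is correct and follows essentially the same route as the paper, which simply invokes the standard WMMSE equivalence of Shi et al. (substituting the first-order-optimal $\mathbf{G}_k^{\star}=\mathbf{E}_k^{-1}$ and the MMSE receiver $\mathbf{U}_k^{\star}$ back into the objective) and defers the details to that reference. You have merely written out in full the argument the paper cites, including the Woodbury step and the consistency of the MMSE receiver with the rate-maximizing one, so there is nothing substantively different to compare.
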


\begin{proof}
 Since $ {\bf{G}}_k$ and ${\bf{U}}_k$ appear only in \eqref{eq:problem2A}, the equivalence between problems $\mathrm{P}_2$ and $\mathrm{P}_1$ can be established by substituting their optimal solutions (by first-order optimality condition) into \eqref{eq:problem2A}. For more details, please refer to \cite[Thm 1]{5756489}.
\end{proof}

Although the WMMSE-based transformation makes the objective function easy to handle, the analog and digital beamforming are coupled in constraints \eqref{eq:problem1B} and \eqref{eq:problem1C}.
Moreover, the non-convex SCNR constraint in \eqref{eq:problem1B}, and the $L_0$ constraints in \eqref{eq:problem1E} and \eqref{eq:problem1G} make the problem challenging. Jointly optimizing %${{\bf{T}}_{\mathrm{RF}}}$ and ${{\bf{T}}_{\mathrm{D}}}$ 
analog and digital beamforming is difficult. We introduce auxiliary variables, ${{\bf{T}}_k} = {{\bf{T}}_{\mathrm{RF}}}{{\bf{T}}_{{\mathrm{D}},k}},\,\forall k$, and ${\bf{W}} = {{\bf{W}}_{\mathrm{RF}}}{{\bf{W}}_{\mathrm{D}}}$.
Problem $\mathrm{P}_2$ is recast as
\begin{subequations}
\begin{align}
    \mathrm{P}_3 : \mathop {\max }\limits_{\mathcal{\tilde{X}}} 
    &\sum\limits_{k = 1}^K {( \log \det ({\bf{G}}_k) - \mathrm{tr} ({\bf{G}}_k {\bf{E}}_k )  ) }   \label{eq:problem3A} \\
    \mathrm{s.t.} \ & \frac{ \mathrm{tr} ({{{\bf{W}}^H}{{\bf{\Sigma}}_{\mathrm{t}}} {\bf{W}} } )}{ \mathrm{tr} ({{\bf{W}}^H}{\bf{\Sigma}}_{\mathrm{cn}} {\bf{W}} )} \ge \gamma, \label{eq:problem3B}\\
    & \sum\limits_{k = 1}^K \left\| {{\bf{T}}_k} \right\|^2 \le P_{\mathrm{T}}, \label{eq:problem3C} \\
    & {{\bf{T}}_k} = {{\bf{T}}_{\mathrm{RF}}}{{\bf{T}}_{{\mathrm{D}},k}},\forall k,   \label{eq:problem3D} \\  
    & {\bf{W}} = {{\bf{W}}_{\mathrm{RF}}}{{\bf{W}}_{\mathrm{D}}},   \label{eq:problem3E} \\
    &\mathrm{\eqref{eq:problem1D}}-\mathrm{\eqref{eq:problem1G}}, \nonumber 
\end{align}
\end{subequations}
where $\mathcal{\tilde{X}} \!=\! \{\! { {{\bf{T}}_{\mathrm{RF}}},{{\bf{T}}_{\mathrm{D},k}} ,  {\bf{U}}_k , {{\bf{W}}_{\mathrm{RF}}},{{\bf{W}}_{\mathrm{D}}},  {\bf{G}}_k , {{\bf{T}}_k} , {\bf{W}}  } \!\}$.

\subsection{PDD-Based Framework for Problem $\mathrm{P}_3$}

We resort to the PDD to solve $\mathrm{P}_3$, as problem $\mathrm{P}_3$ has the form of the problem ($\mathcal{P}$) in \cite{9120361}. 
% where \textbf{Appendix B} provides a brief background on the PDD method. 
The PDD method integrates the penalty method, the augmented Lagrangian method, and the alternating direction method of multipliers (ADMM) and delivers a general algorithmic framework for minimizing a non-convex and non-smooth function while dealing with non-convex coupling constraints. The PDD comprises double-loop iterations, with the outer loops updating the dual variables and penalty parameters, and the inner loops executing the augmented Lagrangian method. The PDD suits problem $\mathrm{P}_3$ with a differentiable objective function.  

%By penalizing the equality constraint \eqref{eq:problem3D} into the objective function \eqref{eq:problem3A}, the AL problem is formulated as
Through the incorporation of two penalty terms for constraint \eqref{eq:problem3D} and \eqref{eq:problem3E} into \eqref{eq:problem3A}, the inner loop can be formulated as %the augmented Lagrangian problem as
\begin{subequations}
\begin{align}
    \mathrm{P}_4:\!\mathop {\min }\limits_{\mathcal{\tilde{X}}} 
    & \sum\limits_{k = 1}^K \{ {  \mathrm{tr} ( {\bf{G}}_k {\bf{E}}_k ) \!-\! \log \det ( {\bf{G}}_k ) } \nonumber \\
    & \ \ + {\frac{1}{2\rho } { \left \|   {{{\bf{T}}_k} \! - \!{{\bf{T}}_{\mathrm{RF}}}{{\bf{T}}_{{\mathrm{D}},k}} \! + \! \rho {{\bf{D}}_k}}  \right \| }^2} \}  \nonumber \\
    & \ \ +  {\frac{1}{2\rho }\! { \left \|  {{\bf{W}} \! - \!{{\bf{W}}_{\mathrm{RF}}}\!{{\bf{W}}_{\mathrm{D}}} \! + \! \rho {\bf{\tilde{D}}}}  \right \| }^2}   \label{eq:problem4A} \\
    \mathrm{s.t.} \  & \mathrm{\eqref{eq:problem1D}}- \mathrm{\eqref{eq:problem1G},\, \eqref{eq:problem3B}}-\mathrm{\eqref{eq:problem3C}},    \nonumber 
    % \\    &\mathrm{\eqref{eq:problem1D}}- \mathrm{\eqref{eq:problem1G}}, \nonumber
\end{align}
\end{subequations}
where $\rho$ denotes the penalty parameter; ${{\bf{D}}_k}$ and ${\bf{\tilde{D}}}$ are the dual variables of constraints \eqref{eq:problem3D} and \eqref{eq:problem3E}, respectively. $\rho$, ${{\bf{D}}_k}, \forall k$, and ${\bf{\tilde{D}}}$ are fixed throughout the inner-loop iterations.

Next, we solve problem $\mathrm{P}_4$ utilizing the PDD framework, where BCD is used to solve problem $\mathrm{P}_4$.
% The key to this method is its internal iterations to resolve the AL problem. As noticed, 
% the optimization variables in ${\mathcal{\tilde{X}}}$ have a blocking property. Thus, BCD is used to solve problem $\mathrm{P}_4$. 
% i.e., each block variable in ${\mathcal{\tilde{X}}}$ is solved alternately in each BCD iteration. %Specifically, we decouple the variables among five sets, i.e., ${ \left\{ {{\omega _k}} \right\},\left\{ {{u_k}} \right\}, \left\{ {{\bf{t}}_k} \right\}, {{\bf{T}}_{\mathrm{RF}}}}$, and ${\left\{ {{{\bf{T}}_{{\mathrm{D}},k}}} \right\} }$. One set of variables is updated while the rest are fixed. Hence, five subproblems are solved in each BCD iteration. 
% Further elaboration on the specifics of the subproblems and their respective solutions are provided in the subsequent sections.

\noindent \emph{1) Equivalent FD Sensing Receive Beamformer ${\bf{W}}$}: 

By letting ${{\bf{t}}_k}= {\mathrm{vec}} ({{\bf{T}}_k})$, ${\bf{t}} = [{\bf{t}}_1^T,\cdots,{\bf{t}}_K^T]^T$, and ${\bf{w}}= {\mathrm{vec}} ({\bf{W}})$, the SCNR in \eqref{Rsnr} can be rewritten as
\begin{align}   \label{Rsnr2}
   & \varGamma ( {{{\bf{T}}_{\mathrm{RF}}},{{\bf{T}}_{\mathrm{D}}},{\bf{W}}_{{\mathrm{RF}}},{\bf{W}}_{{\mathrm{D}}} } )  \!=\! \frac{ {\sigma _0^2} {\left| {\bf{w}}^H \tilde{\bf{A}}\left( {{\theta _0}} \right) {\bf{t}} \right|^2}}{ {{{\bf{w}}^H}{\tilde{\bf{\Sigma}}_{\mathrm{cn}}} {{\bf{w}} }}  }, 
\end{align}
where $\tilde{\bf{A}}(  \theta ) \! = \! \bf{I} \otimes {\bf{A}}(  \theta )$, $\tilde{\bf{\Sigma}}_{\mathrm{cn}} \!\! =  \!\!\sum\limits_{j = 1}^J \! {\sigma _{\mathrm{C}}^2\tilde{\bf{A}}\!( {{\theta _j}} ) {\bf{t}} {\bf{t}}^H  {\tilde{\bf{A}}}^H\!( {{\theta _j}})}  \!+\!  \sigma _{\mathrm{R}}^2{\bf{I}}$.  

When the transmit beamforming is given and fixed, we employ the well-known minimum variance distortionless response (MVDR) beamformer \cite{8239836} to obtain the maximum SCNR at the MIMO radar, as given by
\begin{equation} \label{w}
{\bf{w}} = \frac{ {{\tilde{\bf{\Sigma}}_{\mathrm{cn}}^{-1}} \tilde{\bf{A}}( {{\theta _0}} ){\bf{t}}} } { {{\bf{t}}^H} {\tilde{\bf{A}}}^H( {{\theta _0}} ){\tilde{\bf{\Sigma}}_{\mathrm{cn}}^{-1}} \tilde{\bf{A}}( {{\theta _0}} ){\bf{t}} },    
\end{equation}
then $\bf{W}$ can be obtained by reshaping $\bf{w}$. According to this optimal radar receive beamformer $\bf{W}$, the SCNR in \eqref{Rsnr2} is further expressed as
\begin{equation}  \label{gamma}
\tilde{\varGamma} ( {{{\bf{T}}_{\mathrm{RF}}},{{\bf{T}}_{\mathrm{D},k}}} ) = \sum\limits_{k = 1}^K \mathrm{tr} \left({\bf{T}}_{{\mathrm{D}},k}^H {\bf{T}}_{\mathrm{RF}}^H {\bf{\Phi }} {\bf{T}}_{\mathrm{RF}}{\bf{T}}_{{\mathrm{D}},k}\right), 
\end{equation}
where ${\bf{\Phi }} = \sigma _0^2{{\bf{A}}^H}( {{\theta _0}} ){{\bf{\Sigma}}_{\mathrm{cn}}^{-1}}{\bf{A}}( {{\theta _0}} ) $.

%\item{The Optimization of ${{\bf{t}}_k}$}
\noindent \emph{2) Equivalent FD Transmit Beamformer
${{\bf{T}}_k}$}: \par

With the rest of variables fixed, the subproblem of problem $\mathrm{P}_4$ for solving the auxiliary variable ${{\bf{T}}_k}$ is written as
\begin{subequations}  \label{problem:tk}
\begin{align}
    \mathop {\min }\limits_{\left\{ {{\bf{T}}_k} \right\}} &\sum\limits_{k = 1}^K ( \mathrm{tr} ( {\bf{G}}_k {\bf{E}}_k )  \!+\!
    {\frac{1}{2\rho } { \left \|   {{{\bf{T}}_k} \! - \!{{\bf{T}}_{\mathrm{RF}}}{{\bf{T}}_{{\mathrm{D}},k}} \! + \! \rho {{\bf{D}}_k}}  \right \| }^2} ) \label{eq:problemtkA} \\ 
    \mathrm{s.t.} \ 
    &\sum\limits_{k = 1}^K \mathrm{tr}\left({{\bf{T}}_k^H{\bf{\Phi }}{{\bf{T}}_k}}\right) \ge \gamma,  \label{eq:problemtkB} \\
    &\sum\limits_{k = 1}^K {\left\| {{\bf{T}}_k} \right\|^2}  \le {P_{\mathrm{T}}}, \label{eq:problemtkC} 
\end{align}
\end{subequations}
where ${{\bf{E}}_k}$ is obtained by plugging \eqref{eq:problem3D} into \eqref{eq:ek}, as given by
\begin{align}  \label{eq:ek2}
    {{\bf{E}}_k} & =  {\bf{I}}_{d_s} -2{\cal R}({ {\bf{U}}_k^H {\bf{H}}_k {{\bf{T}}_k} }) + {\sigma ^2} {{\bf{U}}_k^H}{{\bf{U}}_k}  \nonumber \\ 
    & \ \ + \sum\limits_{i =1 }^K \!{\bf{U}}_k^H {{\bf{H}}_k {{\bf{T}}_i}{{\bf{T}}_i^H} { {\bf{H}}_k^H } } {{\bf{U}}_k}.
\end{align}

Notice that \eqref{problem:tk} provides a non-convex quadratic constraint quadratic programming (QCQP). Semi-definite relaxation (SDR) is often applied to resolve such a problem. %However, when the dimensions of the optimization variables are large (e.g., the number of transmit antennas), the complexity under this scheme cannot be ignored. 
However, SDR conducts a relaxation of rank-one constraints, resulting in an additional process like Gaussian randomization to restore the rank-one constraints on obtained solutions \cite{5447068}. Extra computational complexity occurs.

We propose to transform  \eqref{problem:tk} into a series of SOCPs \cite{9772613}, using viable substitution and successive convex approximation (SCA) techniques and solve the SOCPs recursively to obtain a sub-optimal solution to \eqref{problem:tk}.  Let ${\bf{Z}}_k = {\bf{T}}_k + \rho {\bf{D}}_k$ and  ${\bf{P}}_k = {\bf{H}}_k^H {\bf{U}}_k {{\bf{G}}_k}{{\bf{U}}_k^H} { {\bf{H}}_k }$. We introduce slack variables ${a_{k,i}} > 0$ and ${b_k} > 0$, $ i,k\in \{1,\cdots,K \} $ such that ${ {{\bf{t}}_i^H} ({\bf{I}} \otimes {\bf{P}}_k^H) {{\bf{t}}_i}  }  \le {a_{k,i}}$ and $ { \left \| {{{\bf{Z}}_k} \! - \!{{\bf{T}}_{\mathrm{RF}}}{{\bf{T}}_{{\mathrm{D}},k}}}  \right \| }^2 \le {b_k} $, which can be further rewritten in the following second-order cone (SOC) constraints: 
\begin{equation}\label{eq:aki}
    \left\| {\left[ {\begin{array}{*{20}{c}}
    {{\bf{t}}_i^H} ({\bf{I}} \otimes {\bf{P}}_k^H)^{1/2}  \\
    {\frac{{{a_{k,i}} - 1}}{2}}
    \end{array}} \right]} \right\| \le \frac{{{a_{k,i}} + 1}}{2},  \forall i,k;
\end{equation}
\begin{equation}\label{eq:bk}
    \left\| {\left[ {\begin{array}{*{20}{c}}
    {{{\bf{Z}}_k} \! - \!{{\bf{T}}_{\mathrm{RF}}}{{\bf{T}}_{{\mathrm{D}},k}}} \\
    {\frac{{{b_k} - 1}}{2}}
    \end{array}} \right]} \right\| \le \frac{{{b_k} + 1}}{2}, \forall k. 
\end{equation}
In the non-convex SCNR constraint \eqref{eq:problemtkB}, we can rewrite $\mathrm{tr}\left({{\bf{T}}_k^H{\bf{\Phi }}{{\bf{T}}_k}}\right)=  {{\bf{t}}_k^H} ({\bf{I}} \otimes {\bf{\Phi}}^H) {{\bf{t}}_k}$. Then, we resort to the SCA technique to replace this non-convex term with its linear lower bound (e.g., the first-order expansion), i.e., 
\begin{equation}  \label{eq:gamma}
  {\bar{\varGamma}}_k =  {2{\cal R} ( {{\bf{\bar t}}_k^H \bar{{\bf{\Phi }}}{{\bf{t}}_k}} ) - {\bf{\bar t}}_k^H \bar{\bf{\Phi}}{{{\bf{\bar t}}}_k}},
\end{equation}
where $\bar{\bf{\Phi}}= {\bf{I}} \otimes {\bf{\Phi}}^H$, and ${{{\bf{\bar t}}}_k}$ is the value of ${\bf{t}}_k$ in the previous SCA iteration.

Thus, the approximate convex problem of \eqref{problem:tk} is 
\begin{subequations} \label{problem:SOCP}
\begin{align}
    \mathop {\min}\limits_{ \left\{  {{\bf{t}}_k}, {a_{k,i}}, {b_k}  \right\}} &\sum\limits_{k = 1}^K \!{\left( \! {{\sum\limits_{i = 1}^K {a_{k,i}} } \!-\! {2{\cal R}( \mathrm{vec}^H ({\bf{G}}_k {\bf{U}}_k^H {\bf{H}}_k )^H {{\bf{t}}_k} )} \! + \! \frac{1}{2\rho} {b_k} } \! \right)}   \\
    \mathrm{s.t.} \ &\sum\limits_{k = 1}^K {2{\cal R} ( {{\bf{\bar t}}_k^H \bar{{\bf{\Phi }}}{{\bf{t}}_k}} ) - {\bf{\bar t}}_k^H \bar{\bf{\Phi}}{{{\bf{\bar t}}}_k}}  \ge \gamma ,   \\
    %& \sum\limits_{k = 1}^K {\left\| {{\bf{t}}_k} \right\|_2^2}  \le {P_{\mathrm{T}}}, \\   
    &{\left\| {\bf{t}} \right\|}  \le \sqrt{P_{\mathrm{T}}},\\   &\eqref{eq:aki}-\eqref{eq:bk}.\nonumber
\end{align}
\end{subequations}
We can use off-the-shelf toolboxes, e.g., CVX \cite{grant2014cvx}, to efficiently solve \eqref{problem:SOCP}. By recursively solving \eqref{problem:SOCP}, we can obtain at least a local optimum of \eqref{problem:tk}.

Through the use of the PDD framework, we can decouple hybrid transmit and receive beamforming for DFRC into the subproblems of optimizing the equivalent sensing receive beamformer ${\bf{W}} $ and the equivalent DFRC transmit beamformers ${{\bf{T}}_k},\,\forall k$, which can be solved sequentially. The inter-user interference in the considered system leads to non-convexity in the objective function in \eqref{eq:problemtkA}. The clutter interference leads to non-convexity in the SCNR constraint in~\eqref{eq:problemtkB}. To deal with this, variable substitution and SCA are conducted to convexify both the objective function and the SCNR constrain in the subproblem of ${{\bf{T}}_k},\,\forall k$; see \eqref{eq:aki}, \eqref{eq:bk} and \eqref{eq:gamma}. As a result, the subproblem can be solved using off-the-peg CVX toolkits.

\vspace{0.2cm}
\noindent
{\em Remark}: 
PDD is a general algorithmic framework for solving problems with multiplicatively coupled variables. It has been applied to optimize hybrid transmit and receive beamforming for a single user in a MIMO channel, e.g., in~\cite{9110865}. By contrast, the studied problem of multi-user mMIMO can undergo non-negligible inter-user interference; see \eqref{ysignal2}. Moreover, our studied system suffers from interference from clutters to the radar sensing signals received by the BS; see \eqref{rsignal2}. This differs substantially from the work presented in~\cite{9110865}, where no such interference exists. In this sense, our studied problem is distinct from~\cite{9110865} in objectives and constraints.

\noindent \emph{3) Digital Receive Beamformer ${\bf{U}}_k$ for User $k$}: \par
We optimize {$\{{\bf{U}}_k \}$} by fixing all other variables, leading to an unconstrained problem:
\begin{equation} \label{problem:uk}
\mathop {\min } \limits_{ {\bf{U}}_k } \sum\limits_{k = 1}^K { \mathrm{tr} ( {\bf{G}}_k {\bf{E}}_k )} .
\end{equation}
By substituting \eqref{eq:ek2} into \eqref{problem:uk} and then assessing the first-order derivative of \eqref{problem:uk}, the optimum receive combiner ${\bf{U}}_k$ for user $k$ is obtained in closed form as follows:
\begin{equation}
{\bf{U}}_k = \left ({\sum\limits_{i =1 }^K {{\bf{H}}_k}{{\bf{T}}_i}{{\bf{T}}_i^H}{{\bf{H}}_k^H} + \sigma^2 {\bf{I}}_{M_{\mathrm{U}}} } \right )^{-1} \! {{\bf{H}}_k {{\bf{T}}_k}}, \forall k. \label{eq:uk}
\end{equation}
\par

%\item{The Optimization of ${\omega _k}$}
\noindent \emph{4) WMMSE Weighting Matrix ${\bf{G}}_k$}: \par
By substituting \eqref{eq:uk} into \eqref{eq:ek2}, the mean square error $ {{\bf{E}}_k} $ is given by $( {{\bf{I}} - {\bf{U}}_k^H {\bf{H}}_k {{\bf{T}}_k}} )$. Then, assessing the first-order optimality condition of problem $\mathrm{P}_4$ concerning ${\bf{G}}_k$, the closed-form solution of ${\bf{G}}_k$ is
\begin{equation} 
{\bf{G}}_k = {( {{\bf{E}}_k} )^{ - 1}} = ( {{\bf{I}} - {\bf{U}}_k^H {\bf{H}}_k {{\bf{T}}_k}} )^{-1},\forall k. 
 \label{eq:wk}
\end{equation}

%\item{}
\noindent \emph{5) Analog Transmit Beamformer ${{\bf{T}}_{\mathrm{RF}}}$}: \par
Recall that ${\bf{Z}}_k = {\bf{T}}_k + \rho {\bf{D}}_k$, and the analog transmit beamformer ${{\bf{T}}_{\mathrm{RF}}}$ is only in the second term of \eqref{eq:problem4A}. Therefore, the subproblem concerning ${{\bf{T}}_{\mathrm{RF}}}$ can be given by
\begin{equation}  \label{problem:TRF}
\begin{aligned}
    \mathop{\min } \limits_{{\bf{T}}_{\mathrm{RF}}} &\sum\limits_{k = 1}^K { {\left\| {{\bf{Z}}_k} - {{\bf{T}}_{\mathrm{RF}}}{{\bf{T}}_{{\mathrm{D}},k}} \right\|^2} }    \\
    \mathrm{s.t.} \ %& (\ref{eq:problem1B}), (\ref{eq:problem1C}),\\
    &{{\bf{T}}_{\mathrm{RF}}}\left( {m,n} \right) \in \left\{ {0,{\cal F}} \right\}, \forall m,n,  \\
    &{\left\| {{{\bf{T}}_{\mathrm{RF}}}\left( {m,:} \right)} \right\|_0} = 1, \forall m. 
\end{aligned}
\end{equation}
%Since each row of ${{\bf{T}}_{RF}}$ has only one non-zero element with a constant magnitude, the above sub-problem can be split apart row by row and we can solve for the non-zero element of each row separately.
Let ${\bf{Z}}  = \left[ {{{\bf{Z}}_1},\cdots,{{\bf{Z}}_K}} \right]$. Then, the objective $\sum\nolimits_{k = 1}^K { {\left\| {{\bf{Z}}_k} - {{\bf{T}}_{\mathrm{RF}}}{{\bf{T}}_{{\mathrm{D}},k}} \right\|^2} }$ can be rewritten as $\left\| {{\bf{Z}} - {{\bf{T}}_{\mathrm{RF}}}{{\bf{T}}_{\mathrm{D}}}} \right\|^2 $. The constraints in \eqref{problem:TRF} indicate there is only one non-zero element with a constant magnitude in every row of ${{\bf{T}}_{\mathrm{RF}}}$, allowing \eqref{problem:TRF} to be solved row-by-row. \par

Let ${{\cal S}_n},n = 1,\cdots,{N_{\mathrm{RF}}^{\mathrm{t}}} $  denote the transmit antennas linked to the $n$-th RF chain. Because each of the antennas is linked to only an RF chain and the subarrays do not intersect each other, we can have
\begin{equation}\label{eq:DCcons1}
\cup _{n = 1}^{N_{\mathrm{RF}}^{\mathrm{t}}} {{\cal S}_n} = \left\{ {1,\cdots,{M_{\mathrm{T}}}} \right\}, \forall n; 
\end{equation}
\begin{equation}\label{eq:DCcons2}
{{\cal S}_p} \cap {{\cal S}_q} = \emptyset ,\forall p \ne q. 
\end{equation}
%\end{enumerate}
Based on \eqref{eq:DCcons1} and \eqref{eq:DCcons2}, we establish the following proposition to solve problem \eqref{problem:TRF}.

\begin{proposition}\label{Proposition2}
Define ${\bf{Z}}( {m,:} ){{\bf{T}}_{\mathrm{D}}}{( {n,:} )^H} = \left| {{\zeta _{m,n}}} \right|{e^{j{\tilde{\phi} _{m,n}}}} $ for $ m \in {{\cal S}_n} $ and $ n =  1,\cdots,{N_{\mathrm{RF}}^{\mathrm{t}}} $. Problem \eqref{problem:TRF} is rewritten as
\begin{equation}  \label{problem:DSmap}
\begin{aligned}
    \mathop {\max}\limits_{\left\{ {\varphi _{m,n}} \right\} \left\{ {{\cal S}_n} \right\}} &\sum\limits_{n = 1}^{N_{\mathrm{RF}}^{\mathrm{t}}}{\sum\limits_{{\forall m} \in {\cal S}_n} {\left| {\zeta _{m,n}} \right| \cos ( {{\tilde{\phi} _{m,n}} - {\varphi _{m,n}}} )} } \\
    \mathrm{s.t.} \ & \text{\eqref{eq:DCcons1},  \eqref{eq:DCcons2}},
    %\cup _{n = 1}^{N_{RF}} {{\cal S}_n} = \left\{ {1,...,{M_T}} \right\},\\
    %&{{\cal S}_p} \cap {{\cal S}_q} = \emptyset ,\forall p \ne q,
\end{aligned}
\end{equation}
where ${\varphi _{m,n}}$ stands for the phase of the $( {m,n} )$-th element of ${{\bf{T}}_{\mathrm{RF}}}$, i.e., $ {{\bf{T}}_{\mathrm{RF}}}( {m,n} ) = {e^{j{\varphi _{m,n}}}}$.   
\end{proposition}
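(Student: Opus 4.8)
The plan is to use the row-wise separability forced by the single-nonzero-per-row structure of ${\bf{T}}_{\mathrm{RF}}$ to decompose the matrix objective of \eqref{problem:TRF} into $M_{\mathrm{T}}$ scalar subproblems, one per transmit antenna. First I would rewrite $\|{\bf{Z}}-{\bf{T}}_{\mathrm{RF}}{\bf{T}}_{\mathrm{D}}\|^2 = \sum_{m=1}^{M_{\mathrm{T}}}\|{\bf{Z}}(m,:)-{\bf{T}}_{\mathrm{RF}}(m,:){\bf{T}}_{\mathrm{D}}\|^2$, which is legitimate because the Frobenius norm is the sum of squared row norms. Invoking constraints \eqref{eq:problem1D}--\eqref{eq:problem1E} (equivalently the partition \eqref{eq:DCcons1}--\eqref{eq:DCcons2}), every row of ${\bf{T}}_{\mathrm{RF}}$ has a single unit-modulus entry, so if $m\in{\cal S}_n$ then ${\bf{T}}_{\mathrm{RF}}(m,:){\bf{T}}_{\mathrm{D}}=e^{j\varphi_{m,n}}{\bf{T}}_{\mathrm{D}}(n,:)$ and the $m$-th summand depends only on the scalar phase $\varphi_{m,n}$ and on the assignment of antenna $m$ to chain $n$.

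Next I would expand each summand. Using $|e^{j\varphi_{m,n}}|=1$ together with the definition ${\bf{Z}}(m,:){\bf{T}}_{\mathrm{D}}(n,:)^H=|\zeta_{m,n}|e^{j\tilde{\phi}_{m,n}}$,
\begin{align}
\|{\bf{Z}}(m,:)-e^{j\varphi_{m,n}}{\bf{T}}_{\mathrm{D}}(n,:)\|^2
&= \|{\bf{Z}}(m,:)\|^2 + \|{\bf{T}}_{\mathrm{D}}(n,:)\|^2 \nonumber\\
&\quad - 2|\zeta_{m,n}|\cos(\tilde{\phi}_{m,n}-\varphi_{m,n}). \nonumber
\end{align}
Only the last term carries the phase $\varphi_{m,n}$, so after summing over the partition and discarding the constant additive terms, minimizing $\|{\bf{Z}}-{\bf{T}}_{\mathrm{RF}}{\bf{T}}_{\mathrm{D}}\|^2$ becomes maximizing $\sum_{n=1}^{N_{\mathrm{RF}}^{\mathrm{t}}}\sum_{m\in{\cal S}_n}|\zeta_{m,n}|\cos(\tilde{\phi}_{m,n}-\varphi_{m,n})$, which is exactly the objective of \eqref{problem:DSmap}.

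The hard part will be justifying that the two additive terms are genuinely constant with respect to the decision variables $\{\varphi_{m,n}\}$ and $\{{\cal S}_n\}$. The first, $\sum_m\|{\bf{Z}}(m,:)\|^2=\|{\bf{Z}}\|^2$, is plainly constant. The second aggregates to $\sum_{n}|{\cal S}_n|\,\|{\bf{T}}_{\mathrm{D}}(n,:)\|^2$, which in general varies with how many antennas each RF chain is allotted and is therefore not invariant under an arbitrary partition obeying only \eqref{eq:DCcons1}--\eqref{eq:DCcons2}. I would close this gap by appealing to the equal-size subarray convention of the RS architecture, $|{\cal S}_n|=M_{\mathrm{T}}/N_{\mathrm{RF}}^{\mathrm{t}}$ for all $n$, under which $\sum_{n}|{\cal S}_n|\,\|{\bf{T}}_{\mathrm{D}}(n,:)\|^2=(M_{\mathrm{T}}/N_{\mathrm{RF}}^{\mathrm{t}})\|{\bf{T}}_{\mathrm{D}}\|^2$ is a constant; absent that convention, one must keep the per-chain offset $\|{\bf{T}}_{\mathrm{D}}(n,:)\|^2$ inside the assignment cost, and \eqref{problem:DSmap} holds only as the phase-optimal reduction for a fixed partition. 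With the additive terms fixed, the equivalence between \eqref{problem:TRF} and \eqref{problem:DSmap} follows, and the residual combinatorial search over $\{{\cal S}_n\}$ (with the per-row-optimal phase $\varphi_{m,n}=\tilde{\phi}_{m,n}$) is what the ensuing algorithm handles.
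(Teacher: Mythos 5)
Your derivation follows the same route as the paper's Appendix A: expand the Frobenius norm row by row using the single-nonzero-entry structure of ${\bf{T}}_{\mathrm{RF}}$ under \eqref{eq:DCcons1}--\eqref{eq:DCcons2}, expand each squared residual, and observe that only the cross term $2\left| {\zeta _{m,n}} \right|\cos ( {\tilde{\phi} _{m,n}} - {\varphi _{m,n}} )$ carries the phase, so that the minimization becomes the cosine maximization in \eqref{problem:DSmap}. Where you go beyond the paper is in scrutinizing the dropped terms: the paper's proof simply "suppresses the constant term," whereas you correctly note that $\sum_{n}\sum_{m\in{\cal S}_n}|{\bf{T}}_{\mathrm{RF}}(m,n)|^2\|{\bf{T}}_{\mathrm{D}}(n,:)\|^2=\sum_n|{\cal S}_n|\,\|{\bf{T}}_{\mathrm{D}}(n,:)\|^2$ (up to the fixed modulus) is invariant in the phases but not in the partition $\{{\cal S}_n\}$, which is itself a decision variable in \eqref{problem:DSmap}. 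That is a real subtlety the paper's own argument glosses over, not a defect of your proof. One caution on your proposed repair: the equal-size-subarray convention $|{\cal S}_n|=M_{\mathrm{T}}/N_{\mathrm{RF}}^{\mathrm{t}}$ belongs to the PC architecture (cf. \eqref{eq:PC}), not to the RS architecture, whose defining feature is that the greedy per-antenna assignment \eqref{RFset} may yield unequal subarrays; so that escape is not available here. Your fallback is the correct reading: \eqref{problem:DSmap} is an exact reformulation of \eqref{problem:TRF} with respect to the phases for any fixed partition, and is exact jointly over phases and partition only if the per-chain offsets $\|{\bf{T}}_{\mathrm{D}}(n,:)\|^2$ are retained in the assignment cost (or happen to coincide). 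In short, your proof is essentially the paper's, executed with one more careful step than the paper itself takes.
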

\begin{proof}
%Please refer to Appendix \ref{appendix_P2}.\
See \textbf{Appendix \ref{appendix_P2}}.
\end{proof}

%It is clear that the optimal solution for problem \eqref{problem:DSmap} is taken when ${\varphi _{m,n}} = {\phi _{m,n}}$.
Clearly, the objective of problem \eqref{problem:DSmap} is  maximized when ${\varphi _{m,n}} = {\tilde{\phi} _{m,n}}$. Therefore, for each antenna $m$, we map it to the $n$-th RF chain maximizing $\left| {\zeta _{m,n}} \right|\cos ( {{\tilde{\phi} _{m,n}} - {\varphi _{m,n}}} )$:
\begin{equation}
n_m^{\star} = \arg \mathop {\max }\limits_n \left| {\zeta _{m,n}} \right|\cos ( {{\tilde{\phi} _{m,n}} - {\varphi _{m,n}}} ), 
\label{RFset}
\end{equation}
and refresh the subset of transmit antennas for the $n$-th RF chain: ${{\cal S}_{n^{\star}}} = {{\cal S}_{n^{\star}}} \cup \left\{ m \right\}$.\par
The optimal subarray mapping scheme is obtained by performing \eqref{RFset} for each antenna. Then the corresponding optimal analog beamformer is obtained as 
\begin{equation}  \label{TRF}
{\bf{T}}_{\mathrm{RF}} {( {m,n} )} \!=\! 
\begin{cases}
{e^{j{\varphi _{m,n}}}}, & \mathrm{if} \ m \!\in \!{{\cal S}_{n^{\star}}}, \,n \!=\!  1,\!\cdots,\!{N_{\mathrm{RF}}^{\mathrm{t}}};  \\
{0,} &{\mathrm{otherwise.}}
\end{cases}
\end{equation}
Based on the sparsity in the reconfigurable connection constraint \eqref{eq:problem1D}, we guarantee the optimality of the analog beamformer ${\bf{T}}_{\mathrm{RF}}$ since rows are enumerated one after another.

\noindent \emph{6) Analog Sensing Receive Beamformer ${\bf{W}}_{\mathrm{RF}}$}: \par
Define $ {\bf{Q}} ={ {\bf{W}} \! + \! \rho {\bf{\tilde{D}}} } $. The subproblem concerning ${{\bf{W}}_{\mathrm{RF}}}$ is
\begin{equation}  \label{problem:WRF}
\begin{aligned}
\mathop{\min } \limits_{{\bf{W}}_{\mathrm{RF}}} &  {\left\| {\bf{Q}} - {{\bf{W}}_{\mathrm{RF}}}{{\bf{W}}_{\mathrm{D}}} \right\|^2}    \\
    \mathrm{s.t.} \ 
    &{{\bf{W}}_{\mathrm{RF}}}( {\tilde{m},\tilde{n}} )\! \in \! \left\{ {0, {\cal F}} \right\}, \forall \tilde{m},\tilde{n},  \\
    &{\left\| {{{\bf{W}}_{\mathrm{RF}}}( {\tilde{m},:} )} \right\|_0} = 1, \forall \tilde{m}. 
\end{aligned}
\end{equation}
Since problem \eqref{problem:WRF} has the same form as problem \eqref{problem:TRF}, we can solve the radar receive analog beamforming ${\bf{W}}_{\mathrm{RF}}$ in the same way as described in \textbf{Proposition~\ref{Proposition2}}, as given by 
\begin{equation}  \label{problem:DSmap2}
\begin{aligned}
    \mathop{\max}\limits_{ \left\{ { \bar{\varphi} _{\tilde{m},\tilde{n}}} \right\},\left\{ {\bar{\cal S}}_{\tilde{n}} \right\}  }  &\sum\limits_{{\tilde{n}} = 1}^{N_{\mathrm{RF}}^{\mathrm{r}}}{\sum\limits_{{\forall {\tilde{m}}} \in {\bar{\cal S}}_{\tilde{n}}} {\left| {\bar{\zeta} _{{\tilde{m}},{\tilde{n}}}} \right| \cos ( {{\bar{\phi} _{{\tilde{m}},{\tilde{n}}}} - {\bar{\varphi} _{{\tilde{m}},{\tilde{n}}}}} )} }   \\
    \mathrm{s.t.} \ & \cup _{\tilde{n} = 1}^{N_{\mathrm{RF}}^{\mathrm{r}}} \bar{\cal S}_{\tilde{n}} = \left\{ {1,\cdots,{M_{\mathrm{R}}}} \right\}, \forall \tilde{n}, \\
   &\bar{{\cal S}_p} \cap {\bar{\cal S}_q} = \emptyset ,\forall p \ne q,
\end{aligned}
\end{equation}
where $\bar{\cal S}_{\tilde{n}}$, ${\tilde{n}}= 1,\cdots,{N_{\mathrm{RF}}^{\mathrm{r}}}$ stands for the set of receive antennas linked to the ${\tilde{n}}$-th RF chain; ${\bar{\zeta} _{{\tilde{m}},{\tilde{n}}}}$ and ${\bar{\phi} _{{\tilde{m}},{\tilde{n}}}}$ stand for the amplitude and phase of ${\bf{Q}}( {{\tilde{m}},:} ){{\bf{W}}_{\mathrm{D}}}{( {{\tilde{n}},:} )^H}$, respectively; ${\bar{\varphi} _{{\tilde{m}},{\tilde{n}}}}$ is the phase of the $( {{\tilde{m}},{\tilde{n}}} )$-th element of ${{\bf{W}}_{\mathrm{RF}}}$. The corresponding optimal receive analog beamformer is given by
\begin{equation}  \label{WRF}
{\bf{W}}_{\mathrm{RF}} {( {{\tilde{m}},{\tilde{n}}} )} \!=\! 
\begin{cases}
{e^{j{\bar{\varphi} _{{\tilde{m}},{\tilde{n}}}}}}, & \!\mathrm{if} \ {\tilde{m}} \!\in \!{\bar{\cal S}_{{\tilde{n}}^{\star}}}, \,{\tilde{n}} \!=\!  1,\!\cdots,\!{N_{\mathrm{RF}}^{\mathrm{r}}};  \\
{0,} &\!{\mathrm{otherwise.}}
\end{cases}
\end{equation}
Here, ${\bar{\cal S}_{{\tilde{n}}^{\star}}}$ is the set of optimal antennas connected to the ${\tilde{n}}$-th RF chain obtained from \eqref{problem:DSmap2}.

\noindent \emph{7) Digital Transmit Beamformer ${\bf{T}}_{{\mathrm{D}},k}$ and Sensing Receive Beamformer ${\bf{W}}_{\mathrm{D}}$}: 

The subproblem concerning the digital transmit beamforming ${\bf{T}}_{{\mathrm{D}},k}$ is an unconstrained least squares (LS) problem: 
\begin{equation} \label{problem:TDk}
\mathop{\min } \limits_{{\bf{T}}_{{\mathrm{D}},k}} \sum\limits_{k = 1}^K { {\left\| {{\bf{Z}}_k} - {{\bf{T}}_{\mathrm{RF}}}{{\bf{T}}_{{\mathrm{D}},k}} \right\|^2} }. 
\end{equation}
Subsequently, the optimal solution to \eqref{problem:TDk} is obtained as
\begin{equation}   \label{eq:TDk}
{{\bf{T}}_{{\mathrm{D}},k}} = {\bf{T}}_{\mathrm{RF}}^\dag {{\bf{Z}}_k}, \forall k.
\end{equation}

The subproblem concerning the radar digital receive beamforming ${\bf{W}}_{\mathrm{D}}$ can be formulated in the same way as \eqref{problem:TDk}. With reference to \eqref{eq:TDk}, the optimal ${\bf{W}}_{\mathrm{D}}$ can be given by:
\begin{equation}   \label{eq:WD}
{{\bf{W}}_{\mathrm{D}}} = {\bf{W}}_{\mathrm{RF}}^\dag {\bf{Q}}.
\end{equation}

\setParDef

\subsection{Summary of the WPDD Algorithm}

\textbf{Algorithm \ref{alg1}} summarizes the new WPDD algorithm for designing the HBF in the considered DFRC system by solving problem $\mathrm{P}_1$. Let ${\bf{D}}= \left[ {\bf{D}}_1,\cdots,{\bf{D}}_K \right]$. 
In the outer-loop iteration, we refresh the dual variable sets $\{ 
{\bf{D}},\tilde{\bf{D}} \}$  
and the penalty parameter $\rho$, based on the following constraint violation condition: 
\begin{equation}
 h({\mathcal{\tilde{X}}} ) \!\!=\!\! \max({\left\| {{{\bf{T}}_k} \!\!- \!\!{{\bf{T}}_{\mathrm{RF}}}{{\bf{T}}_{{\mathrm{D}},k}}} \right\|, \left\| {{\bf{W}} \!\!- \!\!{{\bf{W}}_{\mathrm{RF}}}{{\bf{W}}_{\mathrm{D}}}} \right\|}). 
\end{equation}
If $ h({\mathcal{\tilde{X}}}) 
\leq \eta$, we update the dual variables by
\begin{align}   
     {\bf{D}}_k &\leftarrow {\bf{D}}_k  +  \frac{1}{\rho}{( {{\bf{T}}_k - {\bf{T}}_{\mathrm{RF}}{\bf{T}}_{\mathrm{D},k}} )}, \forall k, \label{D1} \\ 
% \end{equation}
% \begin{equation}    
     \tilde{\bf{D}} &\leftarrow \tilde{\bf{D}}  +  \frac{1}{\rho}{( {{\bf{W}} - {\bf{W}}_{\mathrm{RF}}{\bf{W}}_{\mathrm{D}}} )}.\label{D2}
\end{align} 
If $ h({\mathcal{\tilde{X}}} ) > \eta $, we reduce the penalty factor $\rho$ by $\rho \leftarrow c \rho$, with $0<c<1$. 
\textbf{Algorithm \ref{alg1}} terminates if $ h({\mathcal{\tilde{X}}} ) 
< \varepsilon _2  $, where $\varepsilon _2 $ is a prespecified threshold. The WPDD stabilizes at a stationary point of problem $\mathrm{P}_1$ \cite{8332507}, as discussed later in Section IV-A.\par

%figure 2
%\begin{figure}[htb]
%	\centering           %\includegraphics[scale=0.6]{Fig_EPS/Algorithm_structure2.eps}    \captionsetup{justification=raggedright,font={small}} 
% \caption{The flow diagram of the proposed WPDD algorithm.} 
	%\label{fig_AlgStructure}
%\end{figure}

\begin{algorithm}[t]
    \renewcommand{\algorithmicrequire}{\textbf{Initialize:}}
	\renewcommand{\algorithmicensure}{\textbf{Output:}}
	\caption{The proposed WPDD algorithm} \label{alg1}
    \begin{algorithmic}[1] 
        \REQUIRE  ${I_{\rm{max}}}$, $\varepsilon _1$, $\varepsilon _2 $, ${\bf{T}}_{\mathrm{RF}}^{(0)}$, ${\bf{T}}_{\mathrm{D},k}^{(0)}$, ${\bf{U}}_k^{(0)}$, ${\bf{W}}_{\mathrm{RF}}^{(0)}$, ${\bf{W}}_{\mathrm{D}}^{(0)}$, ${\rho ^{(0)}}>0$, $0 < c < 1 $ 
        
        \REPEAT
            \REPEAT
                \STATE Update $\bf{W}$ by \eqref{w}, ${{\bf{T}}_k}$ by \eqref{problem:SOCP}, ${\bf{U}}_k$ by \eqref{eq:uk}, ${\bf{G}}_k$ by~\eqref{eq:wk}, ${\bf{T}}_{\mathrm{RF}}$ by \eqref{TRF}, ${\bf{W}}_{\mathrm{RF}}$ by \eqref{WRF}, $ {\bf{T}}_{{\mathrm{D}},k} $ by \eqref{eq:TDk}, $ {\bf{W}}_{\mathrm{D}} $ by \eqref{eq:WD}
            \UNTIL the convergence with $\varepsilon_1$ or ${I_{\rm{max}}}$ inner iterations
            
            \IF {$ h({\mathcal{\tilde{X}}})  \le \eta  $}
            \STATE Update ${\bf{D}}_k$ by \eqref{D1} and $\tilde{\bf{D}}$ by \eqref{D2}
            \ELSE
            \STATE Update $\rho$ by $\rho \leftarrow c \rho$
            \ENDIF    \\
        ${\eta \leftarrow 0.8  {h({\mathcal{\tilde{X}}}) }}$
    \UNTIL ${h({\mathcal{\tilde{X}}}) } < \varepsilon _2 $ 

        \ENSURE ${\bf{T}}_{\mathrm{RF}},{\bf{T}}_{\mathrm{D},k}$, ${\bf{U}}_k$, ${\bf{W}}_{\mathrm{RF}}$, ${\bf{W}}_{\mathrm{D}}$
    \end{algorithmic}
\end{algorithm}

%\section{Special Case Analysis and Algorithm}
%\subsection{Problem Formulation of persistently connected Structure}
\subsection{Adaptation Under PC Architectures}

%\begin{remark}
A PC configuration represents a specific instance of RS connection, where every RF chain is persistently hardwired to a set of antennas. Specifically, the analog beamformer ${\bf{T}}_{\mathrm{RF}}$ and ${\bf{W}}_{\mathrm{RF}}$ are block diagonal matrices. For example, every block of transmit analog beamforming ${\bf{T}}_{\mathrm{RF}}$ has an $M$-dimensional vector with a constant-modulus (CM) value, i.e.,
\begin{equation}  \label{eq:PC}
    {{\bf{T}}_{\mathrm{RF}}} ={\rm{blkdiag}} \{{{{\bf{p}}_1},\cdots,{{\bf{p}}_{N_{\mathrm{RF}}^{\mathrm{t}}}}}\} \in {\mathbb{C}}^{{{M_{\mathrm{T}}} \times {N_{\mathrm{RF}}^{\mathrm{t}}}}},
\end{equation}
where ${{\bf{p}}_i} \in {\mathbb{C}}^{M \times 1} $ for $ i = 1,\cdots,{N_{\mathrm{RF}}^{\mathrm{t}}}$, $M = {{M_{\mathrm{T}}} \mathord{\left/ {\vphantom {{{M_{\mathrm{T}}}} {N_{\mathrm{RF}}^{\mathrm{t}}}}} \right. \kern-\nulldelimiterspace} {N_{\mathrm{RF}}^{\mathrm{t}}}}$, and $ \left| {{{\bf{p}}_i}( \nu )} \right| = 1$ for $\nu = 1,\cdots,M.$\par

By replacing the reconfigurable connection constraints \eqref{eq:problem1D} and \eqref{eq:problem1E} in problem $\mathrm{P}_1$ with \eqref{eq:PC}, we can obtain the DFRC HBF design under the PC architecture. Fortunately, \textbf{Algorithm~\ref{alg1}} is still applicable to this problem, except that the subproblems for updating the digital and analog beamformers, ${{\bf{T}}_{\mathrm{RF}}}$ and ${{\bf{T}}_{\mathrm{D},k}}$ differ, that is, Step 3 in \textbf{Algorithm \ref{alg1}}, becomes different, as described below.\par

%Letting ${\bf{Z}} = {\bf{T}} + \rho {\bf{D}}$, 
The subproblem of ${{\bf{T}}_{\mathrm{RF}}}$, i.e., \eqref{problem:TRF}, can be rewritten as
\begin{align}
    &\mathop{\min} \limits_{{\bf{T}}_{\mathrm{RF}}} \left\| {{\bf{Z}} - {{\bf{T}}_{\mathrm{RF}}}{{\bf{T}}_{\mathrm{D}}}} \right\|^2 \nonumber \\
    = &\mathop{\min} \limits_{{\varphi _{m,n}}} \sum\limits_{\forall n} {\sum\limits_{\forall m} {\left\| {{\bf{Z}}( {m,:} ) - {e^{j{\varphi _{m,n}} }}{{\bf{T}}_{\mathrm{D}}}( {n,:} )} \right\|^2} }. 
\end{align}
Here, $m =  1,\cdots,{M_{\mathrm{T}}} $, and $n = 1,\cdots,\left\lceil {m\frac{N_{\mathrm{RF}}^{\mathrm{t}}}{M_{\mathrm{T}}}} \right\rceil  $. 
Then, the phase of the $(m,n)$-th element of ${\bf{T}}_{\mathrm{RF}}$ is given by
\begin{equation} \label{eq:TRF2}
    {\varphi _{m,n}} = \arg \left( {{\bf{Z}}\left( {m,:} \right){{\bf{T}}_{\mathrm{D}}}{{( {n,:} )}^H}} \right).
\end{equation}

On the other hand, the subproblem concerning ${{\bf{T}}_{\mathrm{D}}}$, i.e., \eqref{problem:TDk}, can be rewritten as
\begin{equation}  \label{problem:TD}
\begin{aligned}
    \mathop {\min} \limits_{{{\bf{T}}_{\mathrm{D}}}} &\left\| {{\bf{Z}} - {{\bf{T}}_{\mathrm{RF}}}{{\bf{T}}_{\mathrm{D}}}} \right\|^2 ,
    % \\
    \;\;\mathrm{s.t.} \ 
    % &
    \left\| {{\bf{T}}_{\mathrm{D}}} \right\|^2 = \frac{{N_{\mathrm{RF}}^{\mathrm{t}}}{P_{\mathrm{T}}}}{{M_{\mathrm{T}}}}, 
\end{aligned}
\end{equation}
which is a non-convex %quadratic constraint quadratic programming (
QCQP. Here, we employ a computationally efficient algorithm to acquire the optimal 
% solution of 
${{{\bf{T}}_{\mathrm{D}}}}$. 

By defining
$\bar{\bf{Q}} ={\bf{T}}_{\mathrm{RF}}^H{{\bf{T}}_{\mathrm{RF}}}$, and
$\bar{\bf{G}} = {\bf{T}}_{\mathrm{RF}}^H{\bf{Z}}$, the objective of \eqref{problem:TD} is reformulated as 
\begin{equation}
    {\mathrm{tr}}({\bf{T}}_{\mathrm{D}}^H \bar{\bf{Q}}{{\bf{T}}_{\mathrm{D}}}) - 2{\cal R}({\mathrm{tr}}({\bf{T}}_{\mathrm{D}}^H \bar{\bf{G}})). 
\end{equation} 
By introducing the Lagrange multiplier $\lambda_L$, the Lagrangian function is formulated as
\begin{equation} \label{eq:Lagfun}
\begin{aligned}
\mathcal{L} ({\bf{T}}_{\mathrm{D}},\lambda_L ) =  & {\mathrm{tr}}({\bf{T}}_{\mathrm{D}}^H \bar{\bf{Q}}{{\bf{T}}_{\mathrm{D}}}) - 2{\cal R}({\mathrm{tr}}({\bf{T}}_{\mathrm{D}}^H{ \bar{\bf{G}}})) \\
&+ \lambda_L (\left\| {{\bf{T}}_{\mathrm{D}}} \right\|^2 -  \frac{{N_{\mathrm{RF}}^{\mathrm{t}}}{P_{\mathrm{T}}}}{{M_{\mathrm{T}}}}) .
\end{aligned}
\end{equation}
According to the optimality condition of \eqref{eq:Lagfun}, we obtain
\begin{equation} \label{eq:TD}
{\bf{T}}_{\mathrm{D}}^{\mathrm{opt}} = { {( { \bar{\bf{Q}} + {\lambda_L^{\mathrm{opt}}}{\bf{I}}} )}^{ - 1}} \bar{\bf{G}},
\end{equation} 
where $\lambda_L ^{\mathrm{opt}}$ can be obtained by bisection search \cite{boyd2004convex}.

In general, the HBF for DFRC systems with PC architectures can be obtained by replacing the solution process of ${{\bf{T}}_{\mathrm{RF}}}$ and ${{\bf{T}}_{\mathrm{D}}}$ in \textbf{Algorithm~\ref{alg1}}, i.e., replacing $\eqref{TRF}$ and $\eqref{eq:TDk}$
using $\eqref{eq:TRF2}$ and $\eqref{eq:TD}$ in Step 3, respectively. The description is omitted because the PC-based receive beamforming can be obtained similarly.

\section{System Performance Analysis}

\subsection{Convergence Analysis}
 
\textbf{Algorithm \ref{alg1}} converges to an effective solution to problem $\mathrm{P}_1$ by iteratively running the proposed double-loop process. Recall that $\mathcal{\tilde{X}} \!=\! \{ { {{\bf{T}}_{\mathrm{RF}}}, {{\bf{T}}_{\mathrm{D},k}},  {\bf{U}}_k, {{\bf{W}}_{\mathrm{RF}}},{{\bf{W}}_{\mathrm{D}}},  {\bf{G}}_k, {{\bf{T}}_k}, {\bf{W}}  } \}$ collects all variables to be updated in the inner-loop, where ${\bf{W}}$, ${\bf{U}}_k$, and ${\bf{G}}_k$ depend uniquely and deterministically on ${\bf{T}}_k$, followed by finding ${{\bf{T}}_{\mathrm{RF}}}{{\bf{T}}_{\mathrm{D},k}}$ and ${{\bf{W}}_{\mathrm{RF}}}{{\bf{W}}_{\mathrm{D}}}$ to approximate ${\bf{T}}_k$ and $\bf{W}$, respectively. In this sense, we can partition $\mathcal{\tilde{X}}$ into three subsets: ${\mathcal{\tilde{X}}}_1 =\{ {\bf{W}}, {\bf{T}}_k, {\bf{U}}_k, {\bf{G}}_k \}$, ${\mathcal{\tilde{X}}}_2 =\{ {{\bf{T}}_{\mathrm{RF}}}, {{\bf{W}}_{\mathrm{RF}}} \}$, and ${\mathcal{\tilde{X}}}_3 = \{ {{\bf{T}}_{\mathrm{D},k}}, {{\bf{W}}_{\mathrm{D}}}\}$, the three of which are updated alternately in each inner-loop iteration.

Given $\rho$, $\bf{D}$, and $\tilde{\bf{D}}$ specified at the beginning of the current outer-loop iteration, ${\mathcal{\tilde{X}}}_1$ depends solely on the SCA process of ${\bf{T}}_k, \forall k$. The objective function \eqref{eq:problem4A} is certainly non-increasing during the process. Given ${\mathcal{\tilde{X}}}_1$, ${\mathcal{\tilde{X}}}_2$ and ${\mathcal{\tilde{X}}}_3$ are updated, which would further decrease \eqref{eq:problem4A} due to the fact that \eqref{TRF} and \eqref{eq:TDk} provide the optimal solutions to \eqref{problem:TRF} and \eqref{problem:TDk}, respectively, and hence ensure the objective function of \eqref{problem:TRF} or \eqref{problem:TDk} is strictly non-increasing. Likewise, \eqref{WRF} and \eqref{eq:WD} can ensure the objective function of \eqref{problem:WRF} is strictly non-increasing. As a result, \eqref{eq:problem4A} is strictly non-increasing during an inner-loop iteration and hence throughout the inner-loop iterations during an outer-loop iteration. On the other hand, \eqref{eq:problem4A} is apparently lower bounded due to the finite and non-negative transmit power of the BS.

In the outer loop, $\rho$, $\bf{D}$, and $\tilde{\bf{D}}$ are updated under the constraint violation condition, ${h(\mathcal{\tilde{X}})}$, until the PDD convergence. While \eqref{eq:problem4A} may not be strictly non-increasing in the outer loop, the outer-loop iterations can increasingly penalize the violation of \eqref{eq:problem3D} and \eqref{eq:problem3E}. When ${\bf{T}}_k$ stops changing in the inner loop, so do the rest of the variables in the inner loop, and the outer-loop iterations also terminate according to \eqref{D1} and \eqref{D2}. With the optimal $\rho^*$, $\bf{D}^*$, and $\tilde{\bf{D}}^*$ in the outer loop, %the equality condition (c) holds. 
\textbf{Algorithm \ref{alg1}} eventually converges.

\subsection{Discussion on Optimality}
Due to the NP-hardness of problem $\mathrm{P}_1$, existing techniques fall short of delivering global optima. Nevertheless, the FD counterpart of the considered system, i.e., problem P$_1$ with \eqref{eq:problem1D}--\eqref{eq:problem1G} dropped, can be solved using \textbf{Algorithm~1} without \eqref{TRF}, \eqref{WRF}, \eqref{eq:TDk} and \eqref{eq:WD} in Step 3, reach a Karush-Kuhn-Tucker (KKT) point, and serve as an upper bound for the proposed HBF algorithm. As evidenced numerically in Section V, the HBF design in \textbf{Algorithm 1} significantly outperforms the existing alternative approaches, much closer to the FD counterpart. In the FD counterpart of the considered problem, only ${\mathcal{\tilde{X}}}_1$ remains to be optimized. As discussed earlier, ${\mathcal{\tilde{X}}}_1$ depends uniquely and deterministically on ${\bf{T}}_k, \forall k$, with the rest of the variables in ${\mathcal{\tilde{X}}}_1$ all given in closed-form functions of ${\bf{T}}_k, \forall k$. To this end, the optimality of ${\mathcal{\tilde{X}}}_1$ depends on that of ${\bf{T}}_k, \forall k$.

Under the FD architecture, the objective function of the ${\bf{T}}_k$ subproblem contains only the first term of \eqref{eq:problemtkA}. The solution yields ${a_{k,i}} = { {{\bf{t}}_i^H} ({\bf{I}} \otimes {\bf{P}}_k^H) {{\bf{t}}_i}  },\, \forall i,k$. Or, one could reduce the auxiliary variable ${a_{k,i}}$ to reduce \eqref{eq:problemtkA}. 
After variable substitution and SCA, the ${\bf{T}}_k$ subproblem is convexified. 
The non-convex constraint $\mathrm{tr} ( {{\bf{T}}_k^H{\bf{\Phi }}{{\bf{T}}_k}} ), \forall k $ is replaced by the first-order Taylor approximation ${2{\cal R} ( {{\bf{\bar t}}_k^H \bar{{\bf{\Phi }}}{{\bf{t}}_k}} ) - {\bf{\bar t}}_k^H \bar{\bf{\Phi}}{{{\bf{\bar t}}}_k}}, \forall k$. Here, $\mathrm{tr} ( {{\bf{T}}_k^H{\bf{\Phi }}{{\bf{T}}_k}} )$ and ${2{\cal R} ( {{\bf{\bar t}}_k^H \bar{{\bf{\Phi }}}{{\bf{t}}_k}} ) - {\bf{\bar t}}_k^H \bar{\bf{\Phi}}{{{\bf{\bar t}}}_k}}$ yield the same value and the same gradient at the local point ${\bf{\bar t}}_k$. 
Moreover, the Slater’s condition is satisfied in every iteration of SCA; i.e., there is a feasible solution to the ${\bf{T}}_k$ subproblem such that all inequality constraints are strictly satisfied. According to~\cite{marks1978general}, this solution is a KKT stationary point, which is a local minimum if it is in the interior of the feasible set of the ${\bf{T}}_k$ subproblem. As a result, the WPDD can achieve a KKT point for the FD counterpart of the considered problem, which is a local minimum if it is inside the feasible region.

\subsection{Computational Complexity}
Since there are typically many more transmit antennas than receive antennas in DFRC systems, the complexity of the WPDD algorithm is predominantly contributed by the solution of ${\bf{T}}_k$ in Step 3. Solving for ${\bf{T}}_k$ in \eqref{problem:SOCP} involves $(K{M_{\mathrm{T}}} + {K^2} + K)$ variables and $({K^2} + K +1)$ SOC constraints. The complexity of running the interior point method to solve for an $\epsilon$-optimal solution for ${\bf{T}}_k$ is $ \mathcal{O} ( { {K^4}{M_{\mathrm{T}}^3}} \log ( 1/ \epsilon ) ),\, \forall k$~\cite{ben2001lectures}. 
The worst-case complexity of WPDD under the RS architecture is $\mathcal{O} \left( { {I_{\mathrm{O}}} {I_{\mathrm{I}}} {\left({ {K^4}{M_{\mathrm{T}}^3} \log ( 1/ \epsilon ) } \right)} }  \right) $, where ${I_{\mathrm{O}}}$ and ${I_{\mathrm{I}}}$ indicate the iteration numbers in the outer and inner loops, respectively.

\subsection{Energy Efficiency (EE)}
% table: power consumption 
\begin{table}[tbp] %\tiny
    \centering
    %\captionsetup{labelfont={color=blue}}
    \caption{The total power consumption under different beamforming architectures}  \label{tab:EE}
    \renewcommand\arraystretch{1.5}
    \resizebox{\linewidth}{!}{
    \begin{tabular}{c|c}
        \hline
        \textbf{Architecture} & \textbf{Total Power Consumption}\\   
         \hline
         %Reconfigurable subarray (
         RS & $ P_\mathrm{tot}^{\mathrm{RS}} = {P_{\mathrm{T}}} \!+ \!{P_{\mathrm{BB}}} \!+\! ({N_{\mathrm{RF}}^{\mathrm{t}}}\! + \!{N_{\mathrm{RF}}^{\mathrm{r}}} ){P_{\mathrm{RF}}} \!+ \!({M_{\mathrm{T}}} \!+\! {M_{\mathrm{R}}}){P_{\mathrm{PS}}} \!+\! ({M_{\mathrm{T}}} \!+ \!{M_{\mathrm{R}}}){P_{\mathrm{SW}}}$ \\   
         % \hline
         %Persistently connected (
         PC & $ P_\mathrm{tot}^{\mathrm{PC}} = {P_{\mathrm{T}}} + {P_{\mathrm{BB}}} + ({N_{\mathrm{RF}}^{\mathrm{t}}}\! + \!{N_{\mathrm{RF}}^{\mathrm{r}}} ){P_{\mathrm{RF}}} + ({M_{\mathrm{T}}} \!+\! {M_{\mathrm{R}}}){P_{\mathrm{PS}}} $  \\
         % \hline
         DPC & $ P_\mathrm{tot}^{\mathrm{DPC}} = {P_{\mathrm{T}}} + {P_{\mathrm{BB}}} + ({N_{\mathrm{RF}}^{\mathrm{t}}}\! + \!{N_{\mathrm{RF}}^{\mathrm{r}}} ){P_{\mathrm{RF}}} + 2({M_{\mathrm{T}}} \!+\! {M_{\mathrm{R}}}){P_{\mathrm{PS}}} $  \\
         % \hline
         %Fully connected (
         FC  & $ P_\mathrm{tot}^{\mathrm{FC}} = {P_{\mathrm{T}}} + {P_\mathrm{BB}} + ({N_{\mathrm{RF}}^{\mathrm{t}}}\! + \!{N_{\mathrm{RF}}^{\mathrm{r}}} ){P_{\mathrm{RF}}} + ({M_{\mathrm{T}}}{N_{\mathrm{RF}}^{\mathrm{t}}} +{M_{\mathrm{R}}}{N_{\mathrm{RF}}^{\mathrm{r}}} ){P_{\mathrm{PS}}} $  \\
         % \hline
         %Fully-digital (
         FD & $ P_\mathrm{tot}^{\mathrm{FD}} = {P_{\mathrm{T}}} + {P_\mathrm{BB}} + ({M_{\mathrm{T}}} \!+\! {M_{\mathrm{R}}}){P_{\mathrm{RF}}} $  \\
         \hline         
    \end{tabular}   }
\end{table}

To evaluate the effect of the RS HAD DFRC system from the perspectives of hardware costs and power consumption, the average EE of each user is defined as 
\begin{equation}
    \eta \buildrel \Delta \over = \frac{ 1 }{K }\frac{ {R} }{P_\mathrm{tot}},
\end{equation}
where $P_\mathrm{tot}$ is the total energy consumption of the DFRC BS. 

The overall power consumption under various beamforming architectures with transmit power $P_\mathrm{T}$ is summarized in Table~\ref{tab:EE}, where $P_{\mathrm{BB}}$ specifies the circuit power consumed by the baseband circuit; $P_{\mathrm{RF}}$ and $P_{\mathrm{SW}}$ are the powers consumed by an RF chain and a switch, respectively; $P_{\mathrm{PS}}$ gives the power consumed by a PS; $N_{\mathrm{PS}}$ is the number of PSs.
Clearly, the FD has the maximum power consumption because there are the same number of RF chains and transmit/receive antennas. Moreover, the number of PSs is proportional to that of the transmit/receive antennas and depends on the specific connection mode in the HAD architecture.

\section{Numerical Results}

In this section, the developed algorithm for DFRC systems with HBF architectures is numerically assessed by MATLAB simulations. The path loss $PL(d_k)$ is modeled as 
\begin{equation}
    PL(d_k) \ [\text{dB}] = \alpha + 10 \beta \log_{10}(d_k) + \xi,
\end{equation}
where $\xi \sim {\cal {N}} (0,\sigma^2)$, $\alpha = 72.0$, $\beta = 2.92$, and $\sigma = 8.7$ dB for a non-line-of-sight (NLoS) path \cite{6834753}. The noise power of user $k$ is $\sigma _k^2 = -90$ dBm. The AoDs $\phi_{k,l},\,\forall k,l$, follow a uniform distribution within $[- \frac{\pi}{ 2}, \frac{\pi}{ 2} ]$. The other simulation parameters are collated in Table \ref{tab:SimPara}. 
% Simulations are conducted on the Matlab 2018b version using a standard PC with an i7-8700 3.20 GHz CPU, 16 GB RAM, and a 64-bit operating system. For the WPDD algorithm, the initial penalty factor is $\rho=1$, the control parameter is $c=0.6$, and termination tolerances are $\varepsilon _1 = \varepsilon _2 = 10^{-4}$, where $\varepsilon _1$ controls the accuracy of inner iterations and $\varepsilon _2$ represents the minimum constraint violation. Moreover, we set the maximum number of iterations $I_{max}$ = 30 to prevent the possibly slow convergence of the inner iterations.
For the WPDD algorithm, the initial penalty factor is $\rho=1$, the control parameter is $c=0.6$, and the termination tolerances are $\varepsilon _1 = \varepsilon _2 = 10^{-4}$, where $\varepsilon _1$ controls the accuracy of inner iterations. 
% and $\varepsilon _2$ represents the minimum constraint violation. 
Moreover, we set the maximum number of iterations $I_{\mathrm{max}}$ = 30 to prevent slow convergence in the inner loop. All of the simulations are conducted using the Matlab 2018b version using a standard PC with an i7-8700 3.20 GHz CPU, 16 GB RAM, and a 64-bit operating system.

%table:Simulation Parameters
\begin{table}[t]
	\caption{Simulation parameters} \label{tab:SimPara}
	\centering
    \scalebox{0.9}{
    \begin{tabular}{l|l||l|l}
		\hline 
		\textbf{Parameter} & \textbf{Value} & \textbf{Parameter} & \textbf{Value}  \\
		\hline    
        $M_{\mathrm{T}}$  & {64} & $M_{\mathrm{R}}$  & 16 \\
        % \hline
        ${N_{\mathrm{RF}}^{\mathrm{t}}}$  & {8} & ${N_{\mathrm{RF}}^{\mathrm{r}}}$  & {8} \\
        % \hline
        $M_{\mathrm{U}}$ & 2 & $P_{\mathrm{T}}$  & 40 dBm \\ 
		% \hline
        $K$  & 4 &  $d_s$  & 2   \\
        % \hline
        $L_k$  & 3  &  $d_k$  & 80 m  \\
        % \hline
        $\theta _0$  & $0^\circ$ & 
        $\theta _j$  & $\{-30^\circ,30^\circ\}$  \\
        % \hline
        $\sigma _{\mathrm{0}}^2$  & 10 dB & $\sigma _{\mathrm{C}}^2$  & 20 dB  \\       
		\hline
	\end{tabular} }
\end{table}

The schemes used for evaluating and benchmarking the system performance are outlined as follows:
\begin{itemize}
 \item{RS-WPDD}: This is the proposed WPDD algorithm for HBF design in multi-user DFRC systems with the RS architectures (see Section III-C). 
 \item{PC-WPDD}: This is the adaptation of the proposed WPDD algorithm for multi-user DFRC systems with the PC architectures (see Section III-D). 
 \item{DPC-THEREON}: This is the joint hybrid waveform and radar receiver optimization algorithm in \cite{9950549} based on the DPS architectures.
 \item{FC-TwoStage}: This is the FC-based HBF algorithm developed in \cite{9729809}, which first constructs an FD beamformer $\mathbf{T}$ and then iteratively optimizes the analog and digital beamformers to approximate the achieved $\mathbf{T}$. 
 \item{PC-TwoStage}: This is the HBF algorithm developed under the PC architectures in \cite{8683591}, which can be implemented similarly to FC-TwoStage algorithm.
 \item{FD/FA}: This is the fully-digital/analog-based beamforming algorithm that provides an upper/lower bound for all HBF algorithms.

\end{itemize}

%收敛图
%\subsection{System Performance Analysis}

%figure2
\begin{figure}[t]
	\centering
	%{\includegraphics[width=3.5in]
    {\includegraphics[scale=0.6]{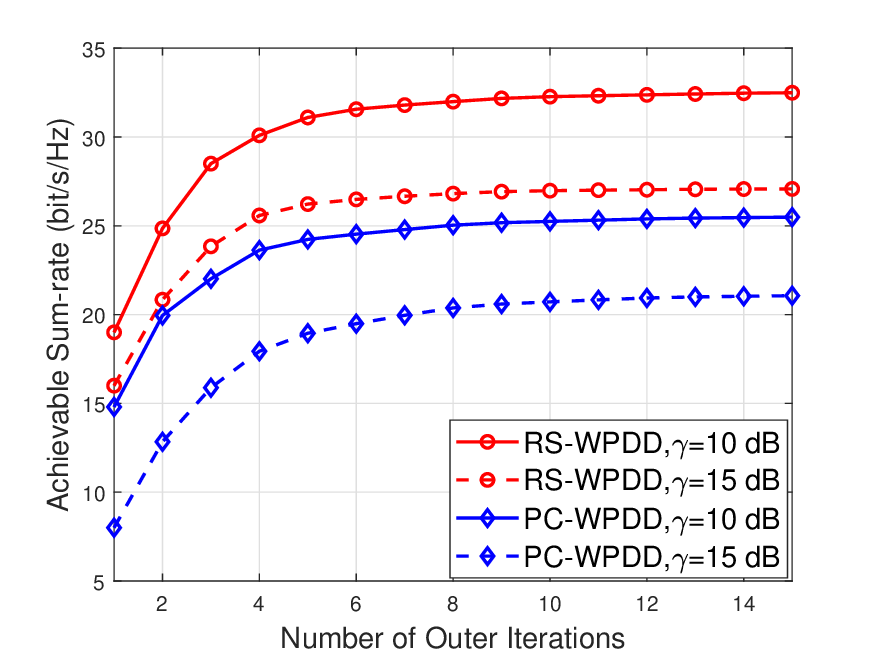}}  \captionsetup{justification=raggedright,font={small}}
	\caption{Convergence behaviors of the WPDD algorithm with $\gamma$ = 10 dB, and 15 dB.}    
\label{fig_Convergence}
\end{figure}

Fig. \ref{fig_Convergence} plots the convergence of the proposed WPDD algorithm for the HBF design with the increase of outer iterations, where the radar SCNR threshold~$\gamma$ is set to 10 dB and 15 dB. It is noticed that the communication sum-rate converges within about ten iterations under both the RS-WPDD and PC-WPDD algorithms. 
% indicating that our proposed algorithm for solving problem $\mathrm{P}_1$ in DFRC systems is effective. 
Moreover, RS-WPDD always obtains better communication performance than PC-WPDD.

%figure3
\begin{figure}[t]
	\centering
	\includegraphics[scale=0.6]{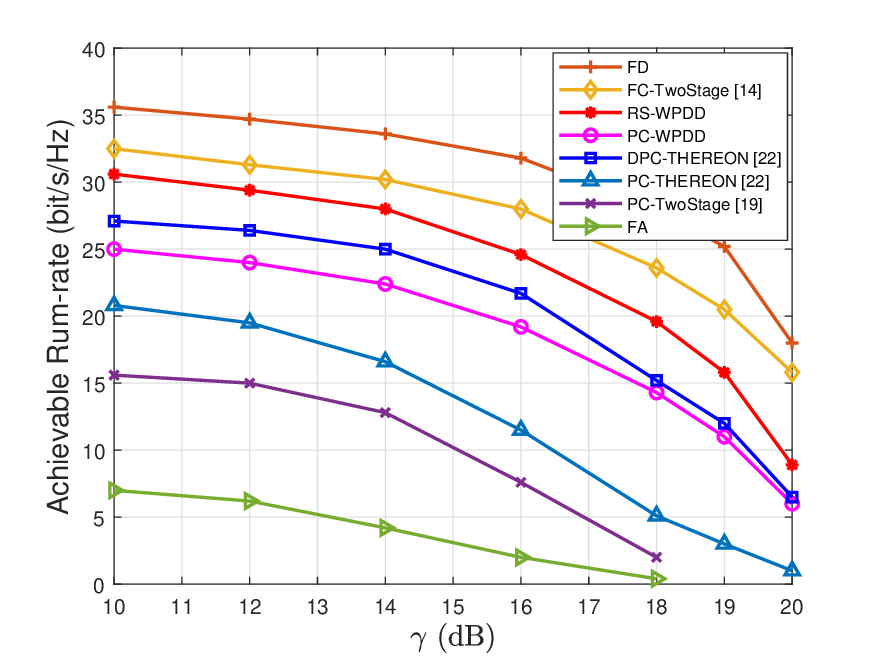} \captionsetup{justification=raggedright,font={small}}
	\caption{Trade-off between the communication sum-rate and the radar receive SCNR $\gamma$, under different beamforming design schemes. $M_{\mathrm{T}}$ = 64, $M_{\mathrm{R}}$ = 16,  ${N_{\mathrm{RF}}^{\mathrm{t}}}={N_{\mathrm{RF}}^{\mathrm{r}}}=8$, $K=4$ and $M_{\mathrm{U}}= d_s=2$.}
	\label{fig_SR_Rsnr}
\end{figure}

%figure 4
\begin{figure}
	\centering
	\includegraphics[scale=0.6]{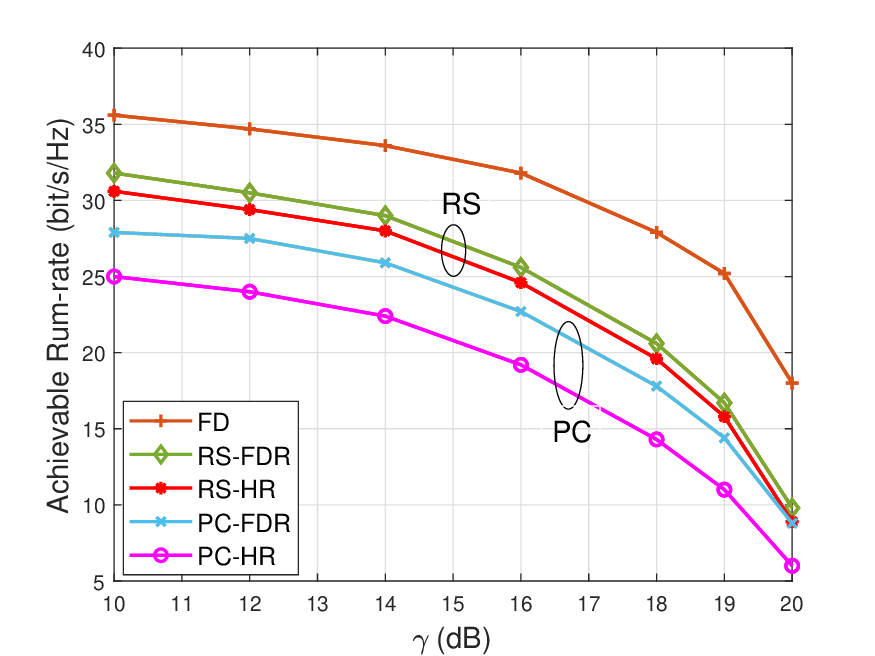} \captionsetup{justification=raggedright,font={small}}
	\caption{ Trade-off between the communication sum-rate and the radar receive SCNR $\gamma$, under hybrid/FD DFRC receivers. $M_{\mathrm{T}}$ = 64, $M_{\mathrm{R}}$ = 16,  ${N_{\mathrm{RF}}^{\mathrm{t}}}={N_{\mathrm{RF}}^{\mathrm{r}}}$ = 8, $K=4$ and $M_{\mathrm{U}}= d_s=2$.}
	\label{fig_SR_FDR}
\end{figure} 

%figure 5
\begin{figure}[tbp]
	\centering
    {\includegraphics[scale=0.6]{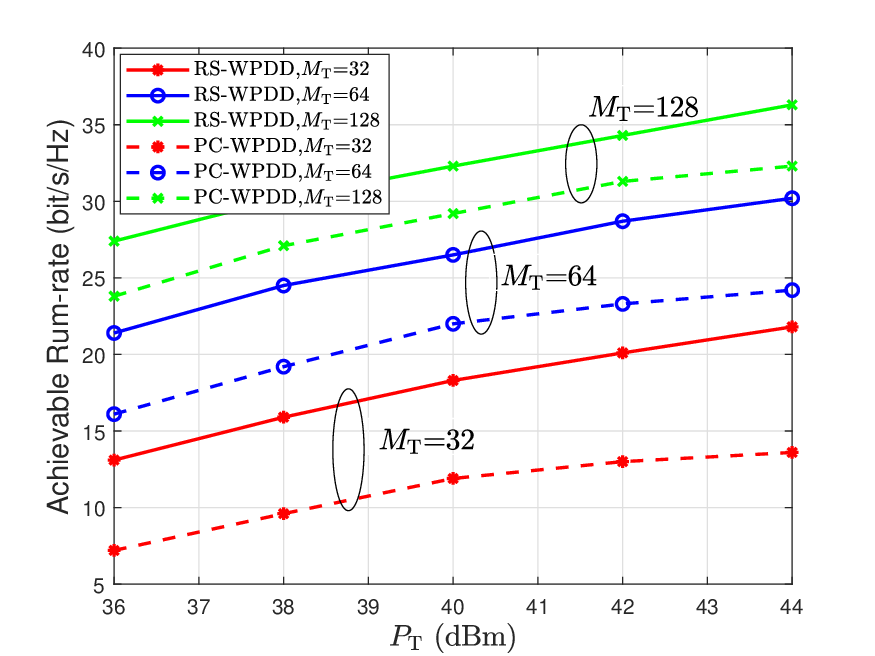}}   \captionsetup{justification=raggedright,font={small}}
	\caption{Achievable sum-rate of communication versus different transmit powers $P_{\mathrm{T}}$, with $\gamma$ = 15 dB, $M_{\mathrm{T}}$ = 32, 64, and 128.}
    \label{fig_SR_Pt}
\end{figure}

Fig. \ref{fig_SR_Rsnr} plots the trade-off between the achievable sum-rate and the radar SCNR constraint under different beamforming design schemes, where $\gamma$ ranges from 10 dB to 20 dB. The sum-rate declines as~$\gamma$ grows. The reason is that when the sensing demand is higher, fewer DoFs are used to improve communication performance. When $\gamma\leq 12$ dB, the multi-user communication sum-rate remains approximately constant since the system is saturated when the sensing performance is easily satisfied with small $\gamma$. When $\gamma \geq 16$ dB, the communication performance degrades rapidly. In addition, Fig. \ref{fig_SR_Rsnr} shows that the proposed RS-WPDD algorithm outperforms the PC-based algorithms. Even the proposed PC-WPDD has comparable performance with DPC-THEREON. Here, FD offers an upper bound for the performance of our proposed system, while FA gives a lower bound.

Fig. \ref{fig_SR_FDR} shows the trade-off between the communication and sensing of the WPDD framework under different DFRC radar receive architectures, including the hybrid receiver (HR) architecture and the fully digital receiver (FDR) architecture. The RS-based DFRC hybrid receive beamforming is closer to the FD receive beamforming than the PC architecture. Fig. \ref{fig_SR_Pt} plots the achievable sum-rate of the DFRC system with the increasing total transmit power $P_{\mathrm{T}}$, where the number of transmit antennas is $M_{\mathrm{T}}=32,\,64$, and 128. It is shown that the proposed RS-WPDD is better than PD-WPDD, as observed in Fig. \ref{fig_SR_Rsnr}. The system performance improves as $M_{\mathrm{T}}$ increases, enhancing antenna diversity and better beamforming gain. Furthermore, the difference between RS-WPDD and PC-WPDD decreases as $M_{\mathrm{T}}$ increases.

%\subsection{Sensing Performance Analysis }

%figure6
\begin{figure}
    {\includegraphics[scale=0.6]{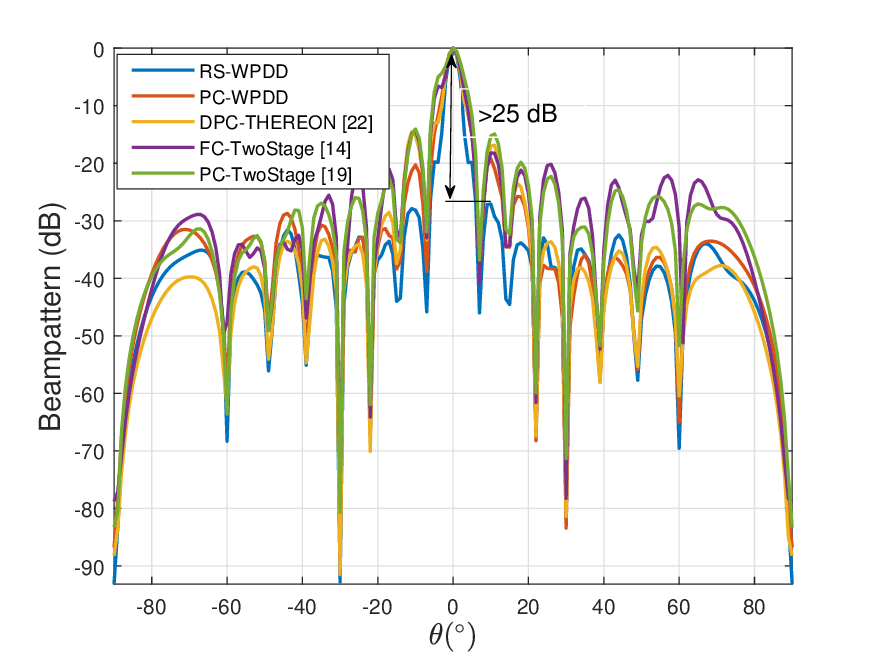}}    \captionsetup{justification=raggedright,font={small}}
	\caption{Beampattern for different architectures with $M_{\mathrm{T}}$ = 32, $M_{\mathrm{R}}$ = 16, $\gamma$ = 18 dB, ${\theta}_0 = 0^\circ$, and ${\theta}_j = \{-30^\circ$, $30^\circ\}$. } 
    \label{fig_BP_Com}
\end{figure}

%figure7
\begin{figure}
	\centering
	\includegraphics[scale=0.6]{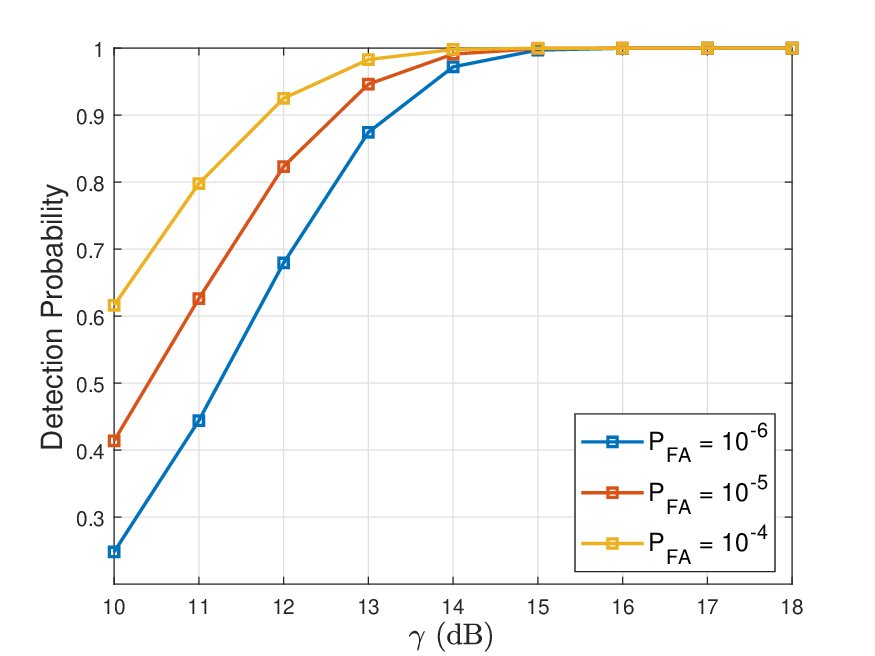}    \captionsetup{justification=raggedright,font={small}}
	\caption{ Detection probability with $M_{\mathrm{T}}$ = 64, $M_{\mathrm{R}}$ = 16, ${\theta}_0 = 0^\circ$, and ${\theta}_j = \{-30^\circ$, $30^\circ\}$. } 
	\label{fig_PD}
\end{figure}

We adopt the spatial beampattern, $P( \theta ) = {\left| {{{\bf{w}}^H}{\bf{A}} ( \theta ){\bf{x}}} \right|^2}$ \cite{9537599}, as a sensing performance metric. The optimized beampatterns of the proposed WPDD algorithm and the benchmarks with $M_{\mathrm{T}} = 32$, $M_{\mathrm{R}} = 16$, and $\gamma = 18$ dB are depicted in Fig.~\ref{fig_BP_Com}. Clearly, all schemes can be pointed at a $0^\circ$ target and achieve good suppression of clutters at $-30^\circ$ and $30^\circ$. Clutters can be effectively suppressed through receive beamforming at the sensing receiver. Moreover, the RS-WPDD scheme has a peak-to-side-lobe ratio of about 30 dB, which is much better than the rest of the schemes.

We also plot the detection probability of the proposed RS-WPDD with the SCNR threshold $\gamma$, as shown in Fig. \ref{fig_PD}. When $\gamma$ = 15 dB, the detection probability is 99.72\% under a false alarm probability of $10^{-6}$, and 99.99\% under a false alarm probability of $10^{-4}$. As shown in Fig. \ref{fig_SR_Rsnr}, RS-WPDD can obtain a satisfactory sensing performance with marginal or even negligible communication rate losses when $\gamma$ = 15 dB.

%EE
% RF SNR K
%\subsection{Energy Efficiency}

%figure8
\begin{figure}
	\centering

    {\includegraphics[scale=0.6]{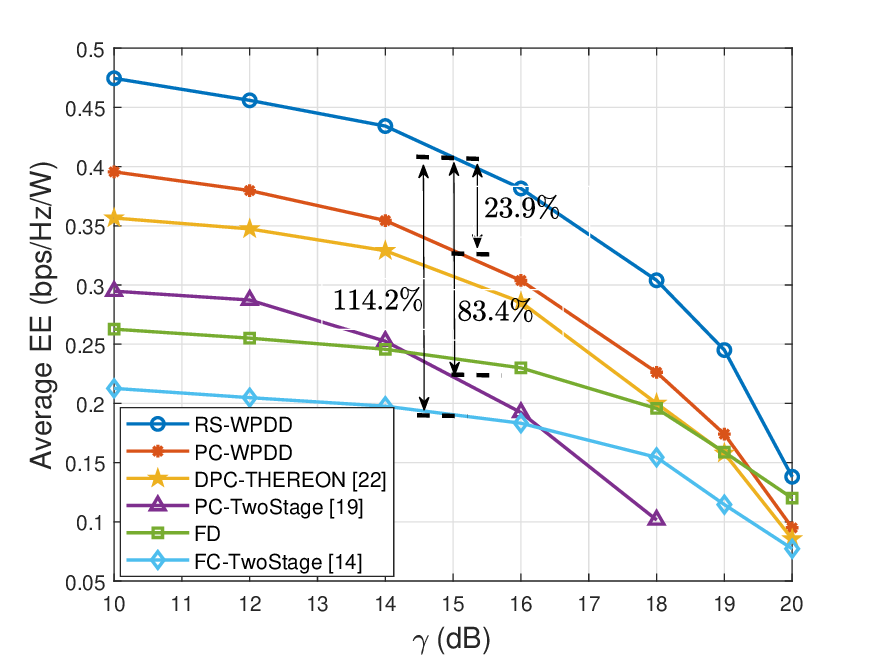}} \captionsetup{justification=raggedright,font={small}}    
	\caption{Average EE vs. the threshold of radar receive SCNR $\gamma$, with $K$ = 4.}
    \label{fig_EE_Rsnr}
\end{figure}

%figure9
\begin{figure}
	\centering
    {\includegraphics[scale=0.6]{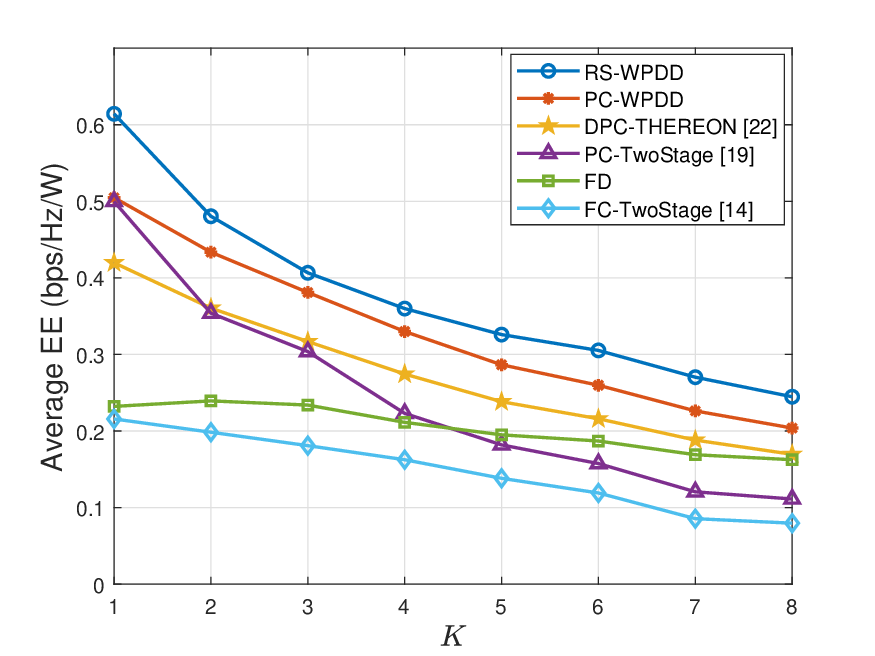}}
    \captionsetup{justification=raggedright,font={small}}
	\caption{Average EE vs. the number of users $K$, with $\gamma$ = 10 dB and $d_s$ = 1.}
    \label{fig_EE_K}
\end{figure}

We evaluate EE to show the effect of our proposed HBF on the system cost and performance, where $P_{\mathrm{BB}} = 200$ mW, $P_{\mathrm{RF}} = 300$ mW, $P_{\mathrm{PS}} = 50$ mW, and $P_{\mathrm{SW}} = 5$ mW~\cite{9110865}.
Fig.~\ref{fig_EE_Rsnr} plots the average EE of the DRFC systems with the increase of $\gamma$. We see that the EE of the proposed RS-WPDD is better than that of PC-WPDD and significantly surpasses the other benchmarks. Specifically, when $\gamma=15$ dB, RS-WPDD improves the EE by 23.9\% compared to PC-WPDD, 83.4\% compared to PC-TwoStage, and 114.2\% compared to FC-TwoStage. Fig. \ref{fig_EE_K} shows the change in EE with the number of users $K \, (K \le {N_{\mathrm{RF}}^{\mathrm{t}}})$, when each user has only one data stream, i.e., $d_s=1$, and ${N_{\mathrm{RF}}^{\mathrm{t}}} = 8$. The RS-WPDD still yields the best EE across the spectrum of the user number, as observed in Fig. \ref{fig_EE_Rsnr}. 
The system parameters can be selected to balance the hardware cost and system performance of DFRC with the RS architectures.

\section{Conclusion}
We have designed HBF for a multi-user mmWave DFRC system with a new RS HAD architecture. 
% By adopting the radar receive SCNR as the performance metric for radar sensing, 
We have jointly optimized digital and analog beamformers to maximize the communication sum-rate under a prescribed SCNR of target detection. An efficient WPDD algorithm has been designed based on WMMSE and PDD methods, which can guarantee convergence to a stationary point. The effectiveness of WPDD has been confirmed in terms of convergence, communication SE, target beampattern, and system EE. Simulations have shown that with the RS architecture and HBF, DFRC systems excel in both SE and EE compared to the conventional PC architecture. The EE of the systems is dramatically improved compared to the FC architecture by 114.2\%.

\appendices

\section{Proof Of Proposition \ref{Proposition2}} 
\label{appendix_P2}
Because the reconfigurable subarrays do not overlap and every transmit antenna is linked to only an RF chain, we can solve problem (\ref{problem:TRF}) row by row, as follows:
\begin{equation} \label{DSmap1}
\begin{aligned}
&\mathop{\min} \limits_{{\bf{T}}_{\mathrm{RF}}} \left\| {{\bf{Z}} - {{\bf{T}}_{\mathrm{RF}}}{{\bf{T}}_{\mathrm{D}}}} \right\|^2\\
\mathop  = \limits^ {(a)} & \mathop{\min} \limits_{{\bf{T}}_{\mathrm{RF}}} \sum\limits_{n=1}^{N_{\mathrm{RF}}^{\mathrm{t}}} {\sum\limits_{{\forall m} \in {{\cal S}_n}} {\left\| {{\bf{Z}}\left( {m,:} \right) - {{\bf{T}}_{\mathrm{RF}}}\left( {m,n} \right){{\bf{T}}_{\mathrm{D}}}\left( {n,:} \right)} \right\|^2} } \\
=& \mathop{\min}\limits_{{{\bf{T}}_{\mathrm{RF}}}} \sum\limits_{n = 1}^{N_{\mathrm{RF}}^{\mathrm{t}}} \sum\limits_{{\forall m} \in {{\cal S}_n}} \{ \left\| {{\bf{Z}}\left( {m,:} \right)} \right\|^2  + {{{\left| {{{\bf{T}}_{\mathrm{RF}}}\left( {m,n} \right)} \right|}^2}\left\| {{{\bf{T}}_{\mathrm{D}}}\left( {n,:} \right)} \right\|^2 \}}  \\
&- 2{\cal R}\left( {{\bf{Z}}\left( {m,:} \right){{\bf{T}}_{\mathrm{D}}}{{\left( {n,:} \right)}^H}{{\bf{T}}_{\mathrm{RF}}}{{\left( {m,n} \right)}^H}} \right),
\end{aligned}
\end{equation}
where the equality $(a)$ holds when the objective function is expanded in rows. \par

Define ${\varphi _{m,n}}$ as the phase of the ($m,n$)-th entry in ${{\bf{T}}_{\mathrm{RF}}}$, i.e., $ {{\bf{T}}_{\mathrm{RF}}}\left( {m,n} \right) = \frac{1}{{\sqrt {M_{\mathrm{T}}} }}{e^{j{\varphi _{m,n}}}}$, and ${\bf{Z}}\left( {m,:} \right){{\bf{T}}_{\mathrm{D}}}{\left( {n,:} \right)^H} = \left| {{\zeta _{m,n}}} \right|{e^{j{\phi _{m,n}}}}, n =  1,\cdots,{N_{\mathrm{RF}}^{\mathrm{t}}} , m \in {{\cal S}_n}$. By suppressing the constant term, \eqref{DSmap1} is further equivalent to
\begin{equation}\label{DSmap2}
\begin{aligned}
&\mathop{\max} \limits_{{\bf{T}}_{\mathrm{RF}}} \sum\limits_{n =1}^{N_{\mathrm{RF}}^{\mathrm{t}}} {\sum\limits_{{\forall m} \in {{\cal S}_n}} {{\cal R}\left( {{\bf{Z}}\left( {m,:} \right){{\bf{T}}_{\mathrm{D}}}{{\left( {n,:} \right)}^H} {{\bf{T}}_{\mathrm{RF}}}{{\left( {m,n} \right)}^H}} \right)} } \\
\mathop  = \limits^ {(b)}& \mathop{\max} \limits_{\left\{ {\varphi _{m,n}} \right\}} \sum\limits_{n =1}^{N_{\mathrm{RF}}^{\mathrm{t}}} {\sum\limits_{{\forall m} \in {{\cal S}_n}} {{\cal R}\left( {{e^{ - j{\varphi _{m,n}}}}{\bf{Z}}\left( {m,:} \right){{\bf{T}}_{\mathrm{D}}}{{\left( {n,:} \right)}^H}} \right)} } \\
=& \mathop{\max} \limits_{\left\{ {\varphi _{m,n}} \right\}} \sum\limits_{n =1}^{N_{\mathrm{RF}}^{\mathrm{t}}} {\sum\limits_{{\forall m} \in {{\cal S}_n}} {\left| {\zeta _{m,n}} \right| \cos \left( {{\phi _{m,n}} - {\varphi _{m,n}}} \right)} }.
\end{aligned}
\end{equation}
Since the ($m,n$)-th entry of ${{\bf{T}}_{\mathrm{RF}}}$ has a constant modulus value, the equality in (b) holds. The final establishment of the equivalence between problems \eqref{problem:TRF} and \eqref{problem:DSmap} is achieved by integrating the objective \eqref{DSmap2} with the reconfigurable connection constraints \eqref{eq:DCcons1} and \eqref{eq:DCcons2}.

\ifCLASSOPTIONcaptionsoff
 \newpage
\fi

%\bibliographystyle{IEEEtran}
%\bibliography{Reference}

%\newpage

\begin{IEEEbiography}[{\includegraphics[width=1in,height=1.25in,clip,keepaspectratio]{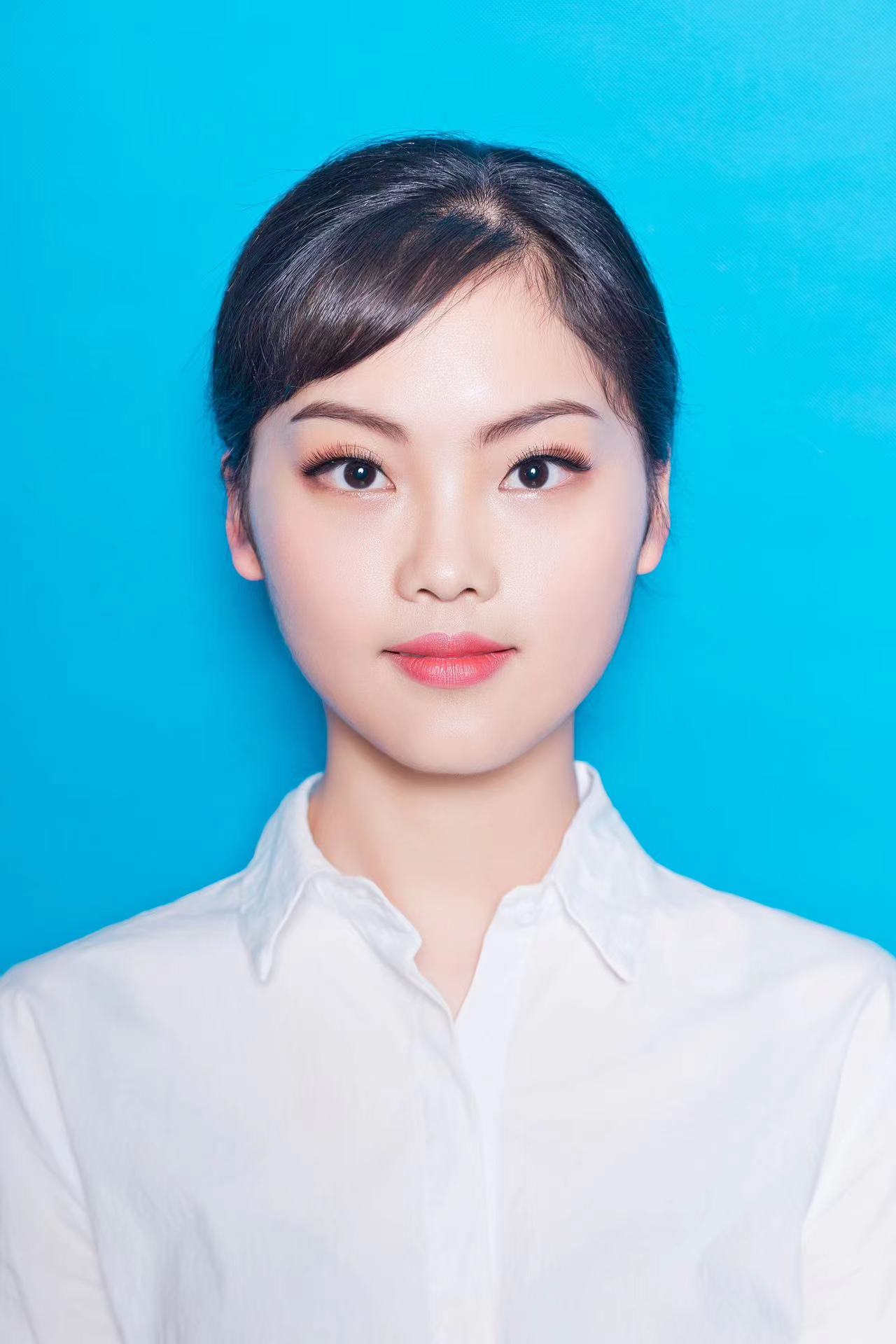}}]{Xin Jin} received the B.S. degree in communication engineering from Sichuan Normal University (SICNU), Chengdu, China, in 2020. She is currently pursuing the Ph.D. degree with the School of Information and Communication Engineering, Beijing University of Posts and Telecommunications (BUPT), Beijing, China. Her research interests include integrated sensing and communications, massive MIMO, hybrid beamforming, and array signal processing.

\end{IEEEbiography}

%\vspace{-0.8cm}

\begin{IEEEbiography}[{\includegraphics[width=1in,height=1.25in,clip,keepaspectratio]{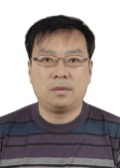}}]{Tiejun Lv}(Senior Member, IEEE) received the M.S. and Ph.D. degrees in electronic engineering from the University of Electronic Science and Technology of China (UESTC), Chengdu, China, in 1997 and 2000, respectively. From January 2001 to January 2003, he was a Postdoctoral Fellow at Tsinghua University, Beijing, China. In 2005, he was promoted to Full Professor at the School of Information and Communication Engineering, Beijing University of Posts and Telecommunications (BUPT). From September 2008 to March 2009, he was a Visiting Professor with the Department of Electrical Engineering at Stanford University, Stanford, CA, USA. He is the author of four books, more than 150 published journal papers and 200 conference papers on the physical layer of wireless mobile communications. His current research interests include signal processing, communications theory and networking. He was the recipient of the Program for New Century Excellent Talents in University Award from the Ministry of Education, China, in 2006. He received the Nature Science Award from the Ministry of Education of China for the hierarchical cooperative communication theory and technologies in 2015. He has won best paper award at CSPS 2022. 
\end{IEEEbiography}

%\vspace{-0.8cm}

\begin{IEEEbiography}[{\includegraphics[width=1in,height=1.25in,clip,keepaspectratio]{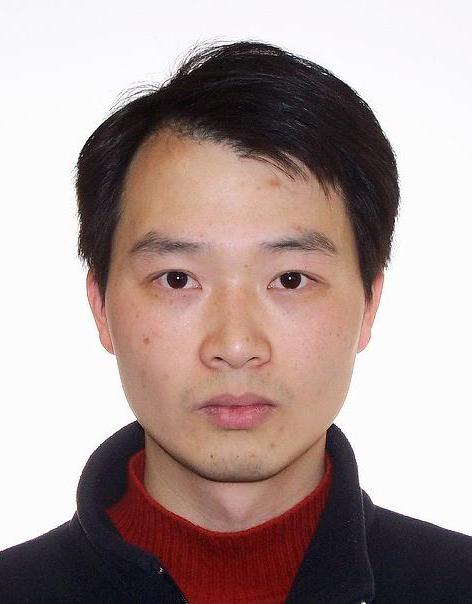}}]{Wei Ni} (Fellow, IEEE) received the B.E. and Ph.D. degrees in Electronic Engineering from Fudan University, Shanghai, China, in 2000 and 2005, respectively. He is a Principal Research Scientist at CSIRO, Sydney, Australia, and a Conjoint Professor at the University of New South Wales. He is also an Adjunct
Professor at the University of Technology Sydney, and an Honorary Professor at Macquarie University. He serves as a Technical Expert at Standards Australia in support of the ISO standardization of AI and Big Data. He was a Postdoctoral Research Fellow at Shanghai Jiaotong University from 2005 to 2008; Deputy Project Manager at Bell Labs, Alcatel/Alcatel-Lucent from 2005 to 2008; and Senior Researcher at Devices R\&D, Nokia from 2008 to 2009. He has co-authored one book, ten book chapters, more than 300 journal papers, more than 100 conference papers, 26 patents, ten standard proposals accepted by IEEE, and three technical contributions accepted by ISO. His research interests include 6G security and privacy, machine learning, stochastic optimization, and their applications to system efficiency and integrity.

Dr. Ni has been an Editor for IEEE Transactions on Wireless Communications since 2018, an Editor for IEEE Transactions on Vehicular Technology since 2022, and an Editor for IEEE Transactions on Information Forensics
and Security and IEEE Communications Surveys and Tutorials since 2024. He served first as the Secretary, then the Vice-Chair and Chair of the IEEE VTS NSW Chapter from 2015 to 2022, Track Chair for VTC-Spring 2017, Track Co-chair for IEEE VTC-Spring 2016, Publication Chair for BodyNet 2015, and Student Travel Grant Chair for WPMC 2014.
\end{IEEEbiography}

%\vspace{-0.8cm}

\begin{IEEEbiography}[{\includegraphics[width=1in,height=1.25in,clip,keepaspectratio]{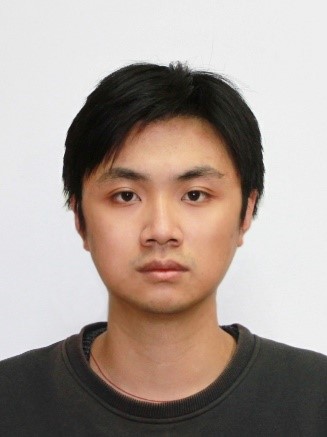}}]{Zhipeng Lin}(Member, IEEE) received the Ph.D. degrees from the School of Information and Communication Engineering, Beijing University of Posts and Telecommunications, Beijing, China, and the School of Electrical and Data Engineering, University of Technology of Sydney, NSW, Australia, in 2021. Currently, He is an Associate Researcher in the College of Electronic and Information Engineering, Nanjing University of Aeronautics and Astronautics, Nanjing, China. His current research interests include signal processing, massive MIMO, spectrum sensing, and UAV communications.
\end{IEEEbiography}

%\vspace{-0.8cm}

\begin{IEEEbiography}[{\includegraphics[width=1in,height=1.25in,clip,keepaspectratio]{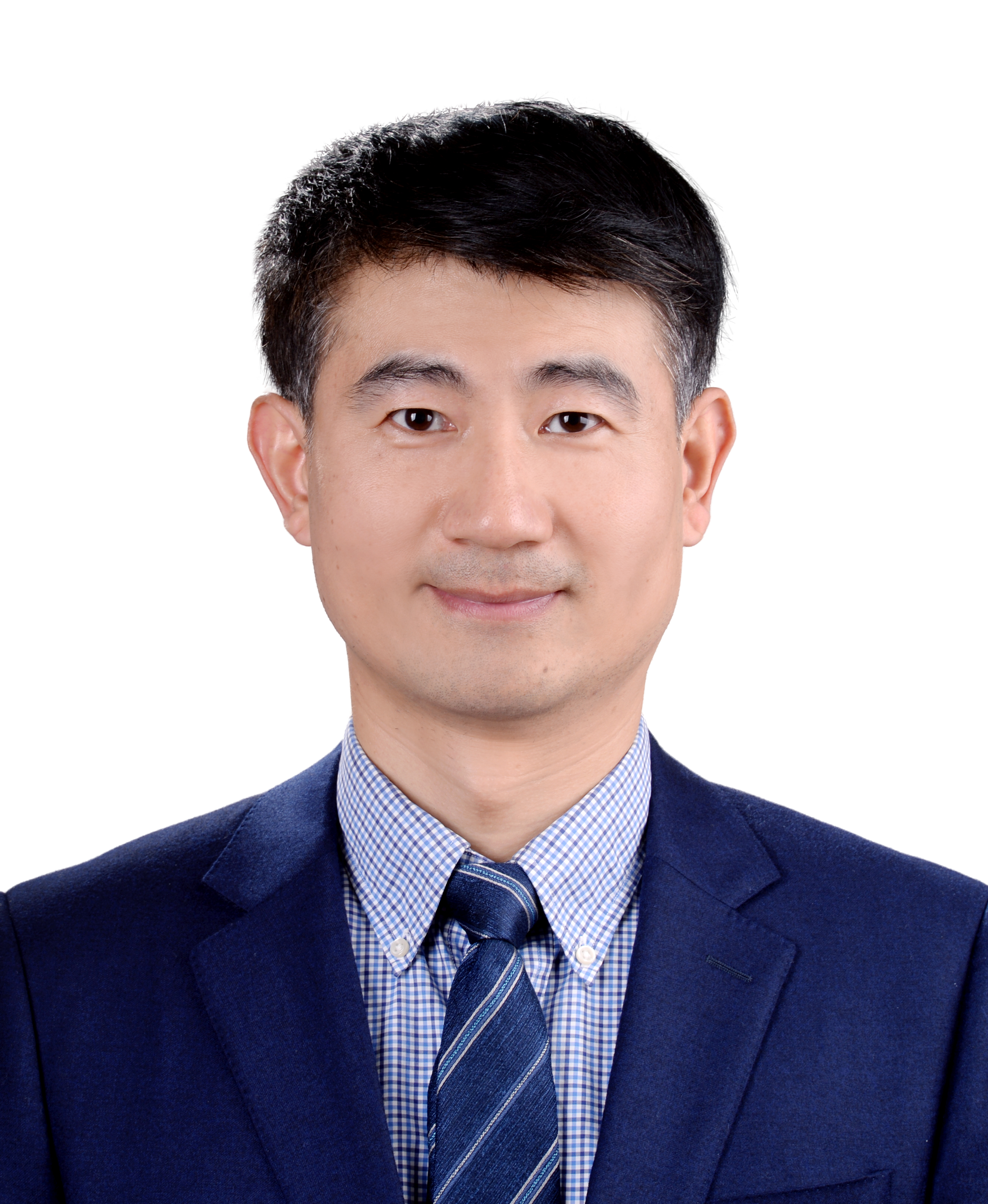}}]{Qiuming Zhu}(Senior Member, IEEE) received his B.S. in electronic engineering from Nanjing University of Aeronautics and Astronautics (NUAA), Nanjing, China, in 2002 and his M.S. and Ph.D. in communication and information system from NUAA in 2005 and 2012, respectively. Since 2021, he has been a professor in the Department of Electronic Information Engineering, NUAA. From Oct. 2016 to Oct. 2017, June 2018 to Aug. 2018 and June 2018 to Aug. 2018, he was also an academic visitor at Heriot Watt University,
Edinburgh, U. K. He has authored or coauthored more than 120 articles in refereed journals and conference proceedings and holds over 40 patents. His current research interests include channel sounding, modeling, and emulation for the fifth/sixth generation (5G/6G) mobile communication, vehicle-to-vehicle (V2V) communication and unmanned aerial vehicles (UAV) communication systems.
\end{IEEEbiography}

%\vspace{-0.8cm}

\begin{IEEEbiography}[{\includegraphics[width=1in,height=1.25in,clip,keepaspectratio]{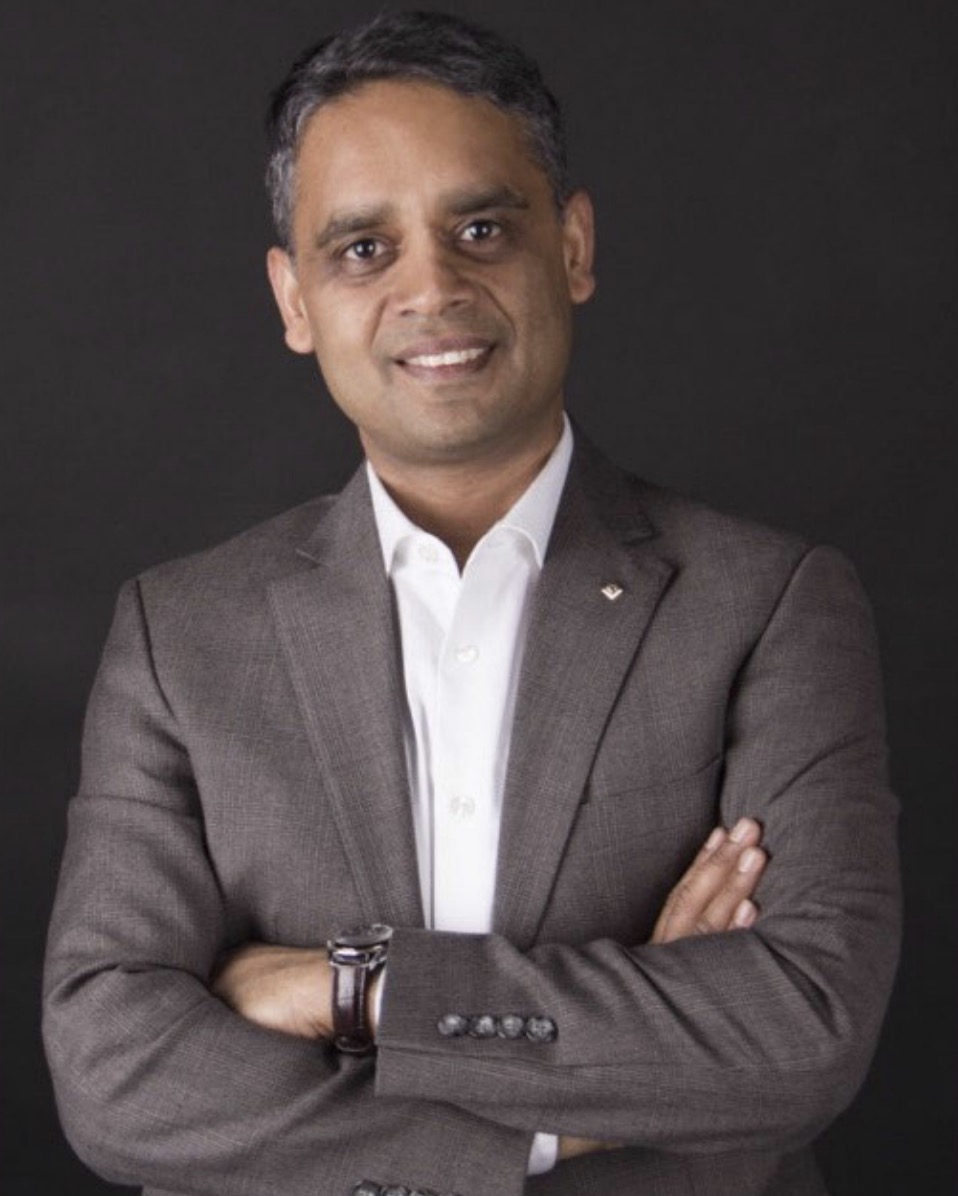}}]{Ekram Hossain}(Fellow, IEEE) is a Professor and the Associate Head (Graduate Studies) of the Department of Electrical and Computer Engineering, University of Manitoba, Canada. He is a Member (Class of 2016) of the College of the Royal Society of Canada. He is also a Fellow of the Canadian Academy of Engineering and the Engineering Institute of Canada. He has won several research awards, including the 2017 IEEE Communications Society Best Survey Paper Award and the 2011 IEEE Communications Society Fred Ellersick Prize Paper Award. He was listed as a Clarivate Analytics Highly Cited Researcher in Computer Science in 2017-2023. Previously, he served as the Editor-in-Chief (EiC) for the IEEE Press (2018–2021) and the IEEE Communications Surveys and Tutorials (2012–2016). He was a Distinguished Lecturer of the IEEE Communications Society and the IEEE Vehicular Technology Society. He served as the Director of Magazines (2020-2021) and the Director of Online Content (2022-2023) for the IEEE Communications Society.
\end{IEEEbiography}

%\vspace{-0.8cm}

\begin{IEEEbiography}[{\includegraphics[width=1in,height=1.25in,clip,keepaspectratio]{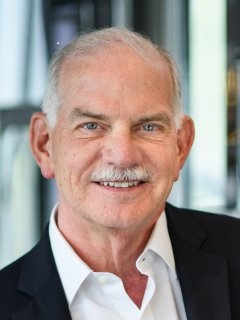}}]{H. Vincent Poor}(Life Fellow, IEEE) received the Ph.D. degree in EECS from Princeton University in 1977. From 1977 until 1990, he was on the faculty of the University of Illinois at Urbana-Champaign. Since 1990 he has been on the faculty at Princeton, where he is currently the Michael Henry Strater University Professor. During 2006 to 2016, he served as the dean of Princeton’s School of Engineering and Applied Science. He has also held visiting appointments at several other universities, including most recently at Berkeley and Cambridge. His research interests are in the areas of information theory, machine learning and network science, and their applications in wireless networks, energy systems and related fields. Among his publications in these areas is the recent book \textit{Machine Learning and Wireless Communications}. (Cambridge University Press, 2022). Dr. Poor is a member of the National Academy of Engineering and the National Academy of Sciences and is a foreign member of the Chinese Academy of Sciences, the Royal Society, and other national and international academies. He received the IEEE Alexander Graham Bell Medal in 2017.
\end{IEEEbiography}

\end{document}